\newtheorem{theorem}{Theorem}[section]
\theoremstyle{definition}
\newtheorem{definition}[theorem]{Definition}
\newtheorem{remark}{Remark}
\newcommand{\mres}{\mathbin{\vrule height 1.6ex depth 0pt width
0.13ex\vrule height 0.13ex depth 0pt width 1.3ex}}
\title[An SBV framework for stripe patterns] 
{An SBV relaxation of the Cross-Newell energy for modeling stripe patterns and their defects} 
\author[Shankar C. Venkataramani]{}
\subjclass{Primary: 35B36, 49J45; Secondary: 26A45, 90C46.}
 \keywords{Pattern formation, phase reduction, measured foliations, special functions of bounded variation, relaxation, split-Bregman method.}
 \email{shankar@math.arizona.edu}
\thanks{The author is supported by NSF grant GCR-20202915}
\begin{document}
\maketitle

\centerline{\scshape Shankar C. Venkataramani}
\medskip
{\footnotesize
 \centerline{Department of Mathematics, University of Arizona}
   \centerline{Tucson, AZ 85721}
   \centerline{USA}
} 

\bigskip



\begin{abstract}
We investigate stripe patterns formation far from threshold using a combination of topological, analytic, and numerical methods. We first give a definition of the mathematical structure of `multi-valued' phase functions that are needed for describing layered structures or stripe patterns containing defects. This definition yields insight into the appropriate `gauge symmetries' of patterns, and leads to the formulation of variational problems, in the class of special functions with bounded variation, to model patterns with defects. We then discuss approaches to discretize and numerically solve these variational problems. These energy minimizing solutions support defects having the same character as seen in experiments.
\end{abstract}


\section{Introduction} \label{sec:intro}

A common motif seen in self-organized systems is that of {\em stripe patterns} or  {\em layered structures}. We see this motif in the epidermal ridges on ones fingers as well as in the accordion like patterns of spines and valleys on a saguaro cactus (see Fig.~\ref{cactus}). Similar patterns are also observed in laboratory experiments on convection (See Fig.~\ref{ellipse-expt}), liquid crystals, and the wrinkling of elastic thin films attached to a substrate. 

An ideal stripe pattern is built from uniformly spaced parallel layers. Consequently, stripe patterns have a discrete translation symmetry normal to the layers, and a continuous translation symmetry along the layers. This is a reflection of the generic mechanisms that gives rise to stripe patterns -- they are typically the result of a symmetry breaking bifurcation where a homogeneous (`featureless') ground state (with the ``full" continuous translation symmetry group) is supplanted by a symmetry broken state with a discrete translation symmetry. 

\begin{figure}[htbp]
\centering
\begin{subfigure}[b]{0.4 \textwidth}
 \includegraphics[height = 0.9 \textwidth]{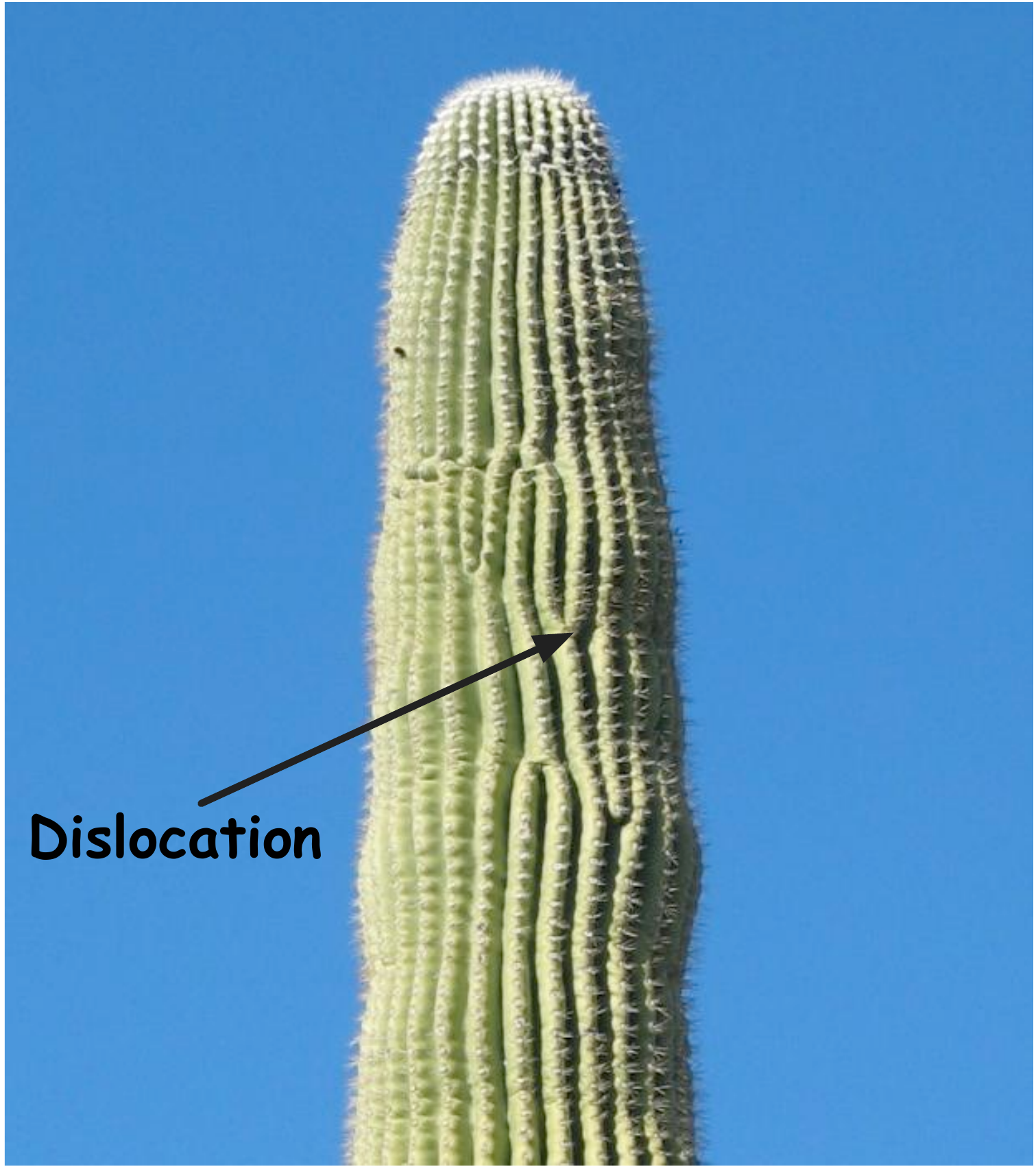} 
 \caption{Patterns and defects in Cacti.}
 \label{cactus}
 \end{subfigure}
\begin{subfigure}[b]{0.58 \textwidth}
 \includegraphics[width = 0.9 \textwidth]{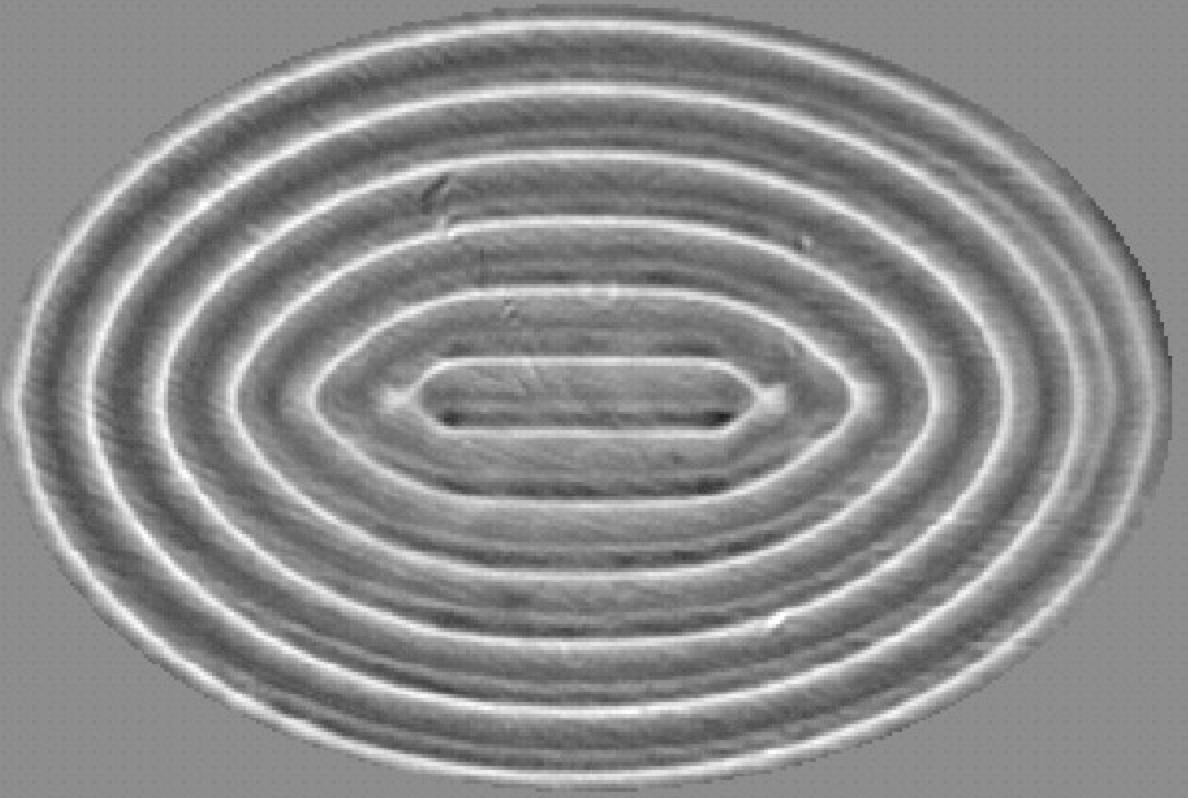} 
 \caption{Convection is an elliptical container.}
 \label{ellipse-expt}
 \end{subfigure}
\caption[Stripe Patterns and point defects]{\label{fig:motivate} (a) An accordion pattern of rows of spines and intervening valleys on {\em Carnegiea gigantea} -- the Saguaro cactus. The folds allow the cactus to expand and store water when available. A bifurcation defect where a layer of spines splits into two is indicated. (b) Convection pattern in an elliptical container with gently heated sidewalls. Heating the sidewalls aligns the direction of the stripes with the boundary corresponding to Dirichlet boundary conditions for the `phase'. Image reproduced with permission from Ref.~\cite{Meevasana2002convection}.}
\end{figure}

 Observed stripe patterns are seldom ideal. Rather, layers can bend and the spacing between layers can be nonuniform corresponding, respectively, to {\em bending} and {\em stretching} deformations of an ideal pattern. Stretching and bending deformations are `topologically nice' and indeed there are useful analogies with bending and stretching deformations of this elastic sheets, that have been explored elsewhere \cite{newell2017elastic}. In the context of this work, stretching and bending deformations break the global symmetries of an ideal stripe pattern, but the symmetry under translations along the stripes is still manifested locally. The focus of this work is stripe patterns with {\em defects}, i.e. topological features that serve as obstructions to `unbending' and `unstretching' a given pattern into an ideal stripe pattern.  In `nice' deformations of the ideal stripe pattern, layers can bend and the separation can become non-uniform, but stripes never merge and they can only ``end" on the boundary of the domain. In contrast, we generically observe defects, including the ending of a layer, or the bifurcation of one layer into two, in naturally occurring stripe patterns, as illustrated in Fig.~\ref{fig:motivate}. In the context of epidermal ridges, these features, called {\em minutiae}, are key to the usefulness of fingerprints for forensics \cite{ratha2000robust}. 
 
In this paper, we restrict our attention to energy driven patterns on planar (two-dimensional) domains. We consider physical systems  in which the first bifurcation is from a homogeneous state to a striped pattern which has only a discrete periodic symmetry in one direction. Further, we retain the full orientational symmetry of the homogeneous state, so there is no preferred orientation for the striped pattern. The \emph{symmetry-breaking} bifurcation to stripes occurs at a critical threshold; above this threshold the pattern can deform and, further away, \emph{defects} can form. Our goal is to formulate models, as (generalized) gradient flows, for the dynamical processes involved in the birth and the motion of defects.

A representative example physical systems that can be described as stripe patterns with defects is high Prandtl number Rayleigh-B\'enard convection. The conduction state is the initial homogeneous state and at a critical Rayleigh number fluid convection is initiated, leading to a ``stripe pattern", which  can be seen in the vertical velocity or equivalently the temperature field on horizontal cross-section across the middle of the experimental cell. This scenario is illustrated in Fig.~\ref{fig:convection} and an illustration the resulting patterns and its defects is shown in Fig.~\ref{shadowgraph}.

\begin{figure}[htbp]
\centering
\begin{subfigure}[b]{0.53 \textwidth}
 \includegraphics[height = 0.9 \textwidth]{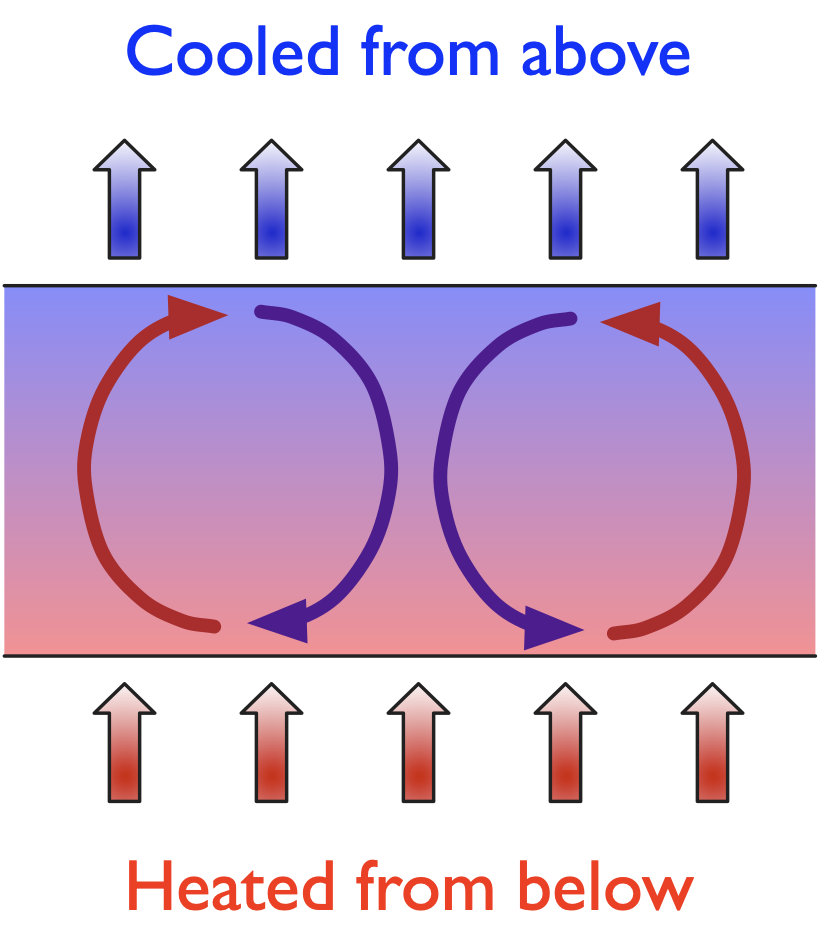} 
 \caption{Schematic of a convection cell.}
 \label{fig:convection}
 \end{subfigure}
\begin{subfigure}[b]{0.45 \textwidth}
 \includegraphics[width = 0.9 \textwidth]{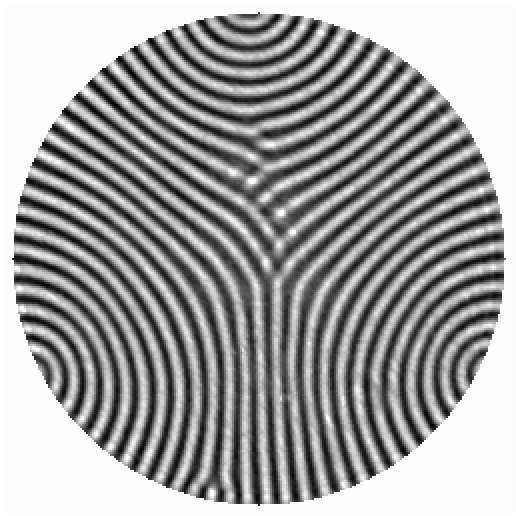} 
 \caption{Shadowgraph of a convection pattern displaying point defects.}
 \label{shadowgraph}
 \end{subfigure}
\caption[Stripe Patterns and point defects]{\label{fig:fluid-cells} (a) A schematic of the convection process. Hot fluid rises from the heated bottom and displaces the cold fluid at the top by buoyancy, setting up a convection roll. (b) A Rayleigh-B\'enard convection pattern. Image reproduced with permission from Ref.~\protect{\cite{Liu1996Spiral}}.}
\end{figure}

In regions that are free of defects, stripe patterns (locally) manifest a continuous translation symmetry along layers and a discrete translation symmetry normal to the layers. Ideal patterns, which manifest these symmetries globally, are described by a periodic function of a \emph{phase}, $\theta = k\cdot x + \theta_0$, where the magnitude $|k|$ is the wavenumber of the pattern and the orientation of $k = (k_1,k_2)$ is perpendicular to the stripes. Here $x=(x_1,x_2)$ is a physical point in the plane. Even though the stripe pattern will deform far from threshold, over most of the field (and in particular away from defects) it can be locally approximated as a function of a  phase $\theta(x)$, for which a local wavevector can be defined as $k = \nabla\theta$ which, as a consequence of the (local) discrete translation symmetry normal to the stripes, is nearly a constant vector over distances on the order of multiple inter-layer separations. 

\subsection{The Swift-Hohenberg equation}
A generic model for the formation of stripe patterns is the Swift-Hohenberg equation \cite{swift1977hydrodynamic}
\begin{equation}
    \frac{\partial \psi}{\partial t} = (R - (1 + \nabla^2)^2) \psi - \psi^3,
    \label{eq:sh}
\end{equation}
which is the $L^2$ gradient flow for the  energy functional
\begin{equation}
    \label{eq:sh_energy}
    \mathcal{F} = \int_\Omega \frac{[(1 + \nabla^2) \psi]^2}{2} + \frac{\psi^4}{4} -  \frac{R \psi^2}{2},
\end{equation}
where $\Omega \subseteq \mathbb{R}^2$ is a planar domain and $\psi:\Omega \to \mathbb{R}$ is a real field. The energy functional is invariant under $\psi \rightarrow - \psi$. Consequently, the homogeneous state $\psi = 0$ is always a steady solution. From the translation invariance of the energy, we can characterize the linearization about this steady state using the Fourier modes $\psi = \delta a_{k}(t) e^{i k \cdot x}$ with $\delta \ll 1$. Dropping the nonlinear terms we obtain
$$
\dot{a}_{k} = \left(R - (1-|k|^2)^2 \right) a_{k}
$$
The homogeneous state is consequently unstable to the formation of rolls for $R > 0$ and $|k|$ in an interval $(k_-,k_+)$ containing $|k|=1$. For every spatially constant wave-vector $k$ with $|k| \in (k_-,k_+)$, there exists a one parameter family of stable critical points for $\mathcal{F}$ of the form $\psi(x) = w_0(k \cdot x + \theta_0,|k|^2)$, where $w_0$ is a $2\pi$-periodic function of it's first argument \cite{CEbook}, with 
\begin{equation}
w_0(\theta + \pi,|k|^2) = - w_0(\theta,|k|^2), \quad \quad w_0(-\theta,|k|^2) = w_0(\theta,|k|^2).
\label{eq:F-symm}
\end{equation}
The linearization about these solutions has a neutral mode, corresponding to translations in space, or equivalently in the phase $k \cdot x \equiv \theta \to \theta+\delta$, and the linearization is stable to perturbations that are orthogonal to the neutral mode \cite{CEbook}.

Cross and Newell \cite{cross1984convection} argue that, away from defects, stripe patterns are modulations of these stable critical points,
\begin{equation}
\psi(x_1,x_2,t) = w_0(\theta(x_1,x_2,t),|\nabla \theta|^2) + O(|\nabla k|),
\label{eq:modulation}
\end{equation}
where $k = \nabla \theta$ varies slowly in space and time, i.e $|\nabla k| \ll 1$. This is a powerful idea, and using the modulation ansatz with the method of multiple scales, Cross and Newell obtain a phase diffusion equation that describes the long wavelength behavior of stripe patterns, even far from onset \cite{cross1984convection}. We will return to this point in Sec.~\ref{sec:rcn}.

While a smooth and slowly varying field $\theta(x_1,x_2,t)$ gives a smooth stripe pattern from the modulation equation, the converse is not true. 
This is due to the structure of the space on which the phase $\theta(x_1,x_2)$ ``lives". In particular, the phase $\theta$ is not physically observable, unlike the order parameter $\psi \approx w_0(\theta,|\nabla\theta|^2)$, which represents, for example, the vertically averaged temperature in a convection pattern. We thus have to identify (local representatives) of different phase functions $\theta(x_1,x_2)$ that give the same order parameter $\psi = w_0(\theta(x_1,x_2))$. From the symmetries in Eq.~\eqref{eq:F-symm}, it follows that the phase $\theta$ is therefore identified with $\theta + 2 n \pi, n \in \mathbb{Z}$ (periodicity) and also with $ - \theta$ (even function) 
as illustrated in fig.~\ref{fig:phase}.  As a consequence, even in regions where the modulation ansatz~\eqref{eq:modulation} is valid, the phase is  a ``multi-valued" function \cite{newell1996defects,ercolani2000geometry}. We will refer to the allowed transformations of the phase function, that keep the `physical field' $\psi$ invariant, as {\em gauge transformations}.

\begin{figure}[t]
\vspace*{-0.5pc}
\centering
\begin{subfigure}[b]{0.36 \textwidth}
 \includegraphics[width = \textwidth]{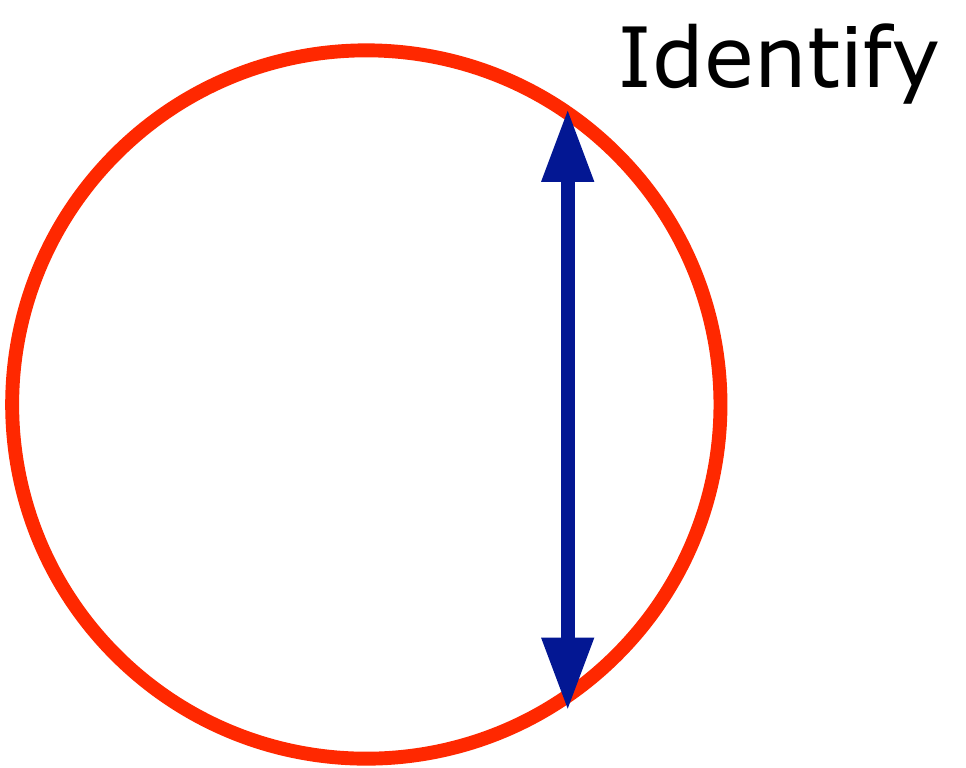} \\
  \includegraphics[width = 0.75 \textwidth]{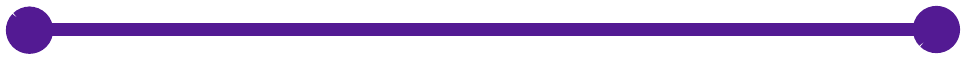}
  \caption{``Phase" space.}
  \label{fig:phase}
 \end{subfigure}
  \begin{subfigure}[b]{0.3\textwidth}
        \includegraphics[width=0.9\textwidth]{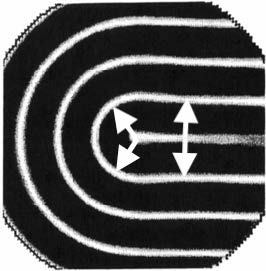}
        \caption{Convex disclination.}
        \label{convex}
    \end{subfigure}
    \begin{subfigure}[b]{0.3\textwidth}
        \includegraphics[width=0.9\textwidth]{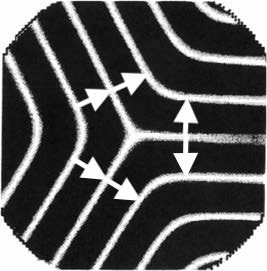}
        \caption{Concave disclination.}
        \label{concave}
    \end{subfigure}
\caption[Disclinations]{\label{fig:disclination} The geometry of ``Phase" space and associated point defects. }
\end{figure}

The approach in this work builds on ideas that were introduced earlier \cite{ercolani2009variational,Zhang2021Computing}. In particular, it is crucial that our models do not have additional symmetries beyond those of the underlying physical system \cite{ercolani2009variational}, and that our computational approach is able to capture the non-orientablity of patterns due to the presence of point defects \cite{Zhang2021Computing}. In this work, we will find it useful to consider the `extended phase space' comprising of the phase $\theta(x_1,x_2)$ and the corresponding wavevector $k = \nabla \theta$. The symmetries of $w_0$ imply the `extended' symmetries $\theta \to \theta + 2n \pi, k \to k$ and $\theta \to - \theta, k \to - k$. One can define topological invariants, based on the monodromy of $(\theta,k)$ on traversing a closed circuit in the domain \cite{Zhang2021Computing}. In particular, any circuit that results in a flip of $k$ encloses a net {\em disclination} and the `elementary' disclinations with degree $\pm \frac{1}{2}$ are respectively the convex and concave disclinations illustrated in Fig.~\ref{convex}~and~\ref{concave}. 

\subsection{An outline with the main results}

The goal of this work is to develop numerical methods to investigate the evolution of stripe patterns with particular attention to the role of point defects. This requires us to bring together ideas pertaining to topological defects \cite{mermin1979topological}, free discontinuity problems and special functions of bounded variation (SBV) \cite{degiorgi1988sbv}, and numerical methods that meld ideas from convex optimization and compressed sensing \cite{goldstein2009}. This paper is organized into sections, that roughly reflect this splitting of ideas/methods.

In \S\ref{sec:foliations} we begin by reviewing the topological notion of measured foliations \cite{Thurston1988Geometry}, which give the natural setting for studying layered media \cite{poenaru1981some}. We argue that layered structures are represented by a restricted class of measured foliations, those that have an unambiguous definition of layers and half-layers \cite{Machon2019Aspects}. We propose a mathematical condition that would enforce this restriction (See Eq.~\eqref{eq:overlap} below). Then we discuss a `gauge symmetry' that is automatically implied by a phase description of stripe patterns, and the quest to pick a canonical representative among the various gauge equivalent phase descriptions of a given stripe pattern. We frame this discussion in the context of the {\em orientability} of a stripe pattern \cite{Knoeppel2015Stripe}. Our key result is Thm.~\ref{thm:orient} that allows us to pick a representative that minimizes an appropriate `defect' and is thus natural from the viewpoint of energy minimization.

In \S\ref{sec:rcn} we discuss the how one can formulate  variational problems that allow for the possibility of non-orientable defects without any apriori restrictions on the nature of the defect set. This procedure is agnostic about the underlying physical model, and can be adapted to a variety of energy functionals that are used to describe convection patterns \cite{cross1984convection} or smectic liquid crystals \cite{aviles1987mathematical,Machon2019Aspects}.  We also review the theory of SBV functions and their applications to free discontinuity problems \cite{degiorgi1988sbv}. The two key results in this section are (i) the identification of the appropriate class of admissible SBV functions that correspond to stripe patterns (See Eq.~\eqref{admissible_class}), an analytic formulation of the appropriate topological condition (See Eq.~\eqref{eq:overlap}), and (ii) a relaxation of the pattern energy functional to an unconstrained variational problem on SBV, such that the minimizer of the relaxed problem also gives a (constrained) minimizer in the admissible class $\mathcal{A}$.

In \S\ref{sec:bregman} we formulate a numerical method to study the variational problems on SBV spaces that come up in our modeling of stripe patterns. To this end, we combine ideas from numerical convex optimization \cite{yin2008bregman}, the method of minimizing movements \cite{Ambrosio1995Minimizing} and convex splitting \cite{Ascher1995ImplicitexplicitMF}. Our method is discretized using finite elements and it allows us to time-step the abstract gradient flow $u_t = - \frac{\delta \mathcal{E}}{\delta u}$ even in situations where $\mathcal{E}$ is not Fr\'{e}chet differentiable, or even convex. We illustrate the method with explicit computations. We conclude in \S\ref{sec:discussion} with a short discussion of the questions that naturally arise from this work and its potential future extensions.

\section{Foliations and layered media}
\label{sec:foliations}

We now discuss the mathematical framework for describing stripe patterns. There is a natural identification between stripe patterns and textures in smectic liquid crystals \cite{Machon2019Aspects} that we will exploit in what follows. The appropriate mathematical structures for describing the topology of defects in smectic liquid crystals \cite{poenaru1981some} are {\em measured foliations}~\cite{Thurston1988Geometry,fathi2012thurston}, whose definition we now recall.

Let $\Omega$ be a simply connected open subset of $\mathbb{R}^2$ with  compact closure $\bar{\Omega}$ and a smooth boundary $\partial \Omega$. A smooth foliation of $\Omega$ is a disjoint  decomposition of $\Omega$ into leaves, i.e. smooth one dimensional manifolds. This notion can be made precise by demanding that $\Omega$ admit coordinates that map the leaves of the foliation into the `standard' foliation on $\mathbb{R}^2$, given by the collection of the `horizontal leaves' $\{(x_1,x_2) \, | \, x_2 = c\}$.
\begin{definition}
A (smooth) foliation $\mathscr{F}$ of $\Omega$ is a `local' product structure on $\Omega$, i.e. a collection of open sets $\{U_\alpha\}$ that cover $\Omega$, each equipped with a diffeomorphism $\varphi_\alpha:U_\alpha \to \mathbb{R}^2$, such that the transition functions $\varphi_\alpha\circ \varphi_\beta^{-1}:\varphi_\beta(U_\alpha \cap U_\beta)  \to \mathbb{R}^2$ are given by $(x_1,x_2) \mapsto (f_{\alpha \beta}(x_1,x_2), g_{\alpha \beta}(x_2))$, where $f_{\alpha \beta},g_{\alpha \beta}$ are smooth functions. 
\end{definition}

A natural idea is to identify the leaves is a foliation with the layers in a stripe pattern or a smectic liquid crystal. There are, however, a couple of issues to address in order to make an useful identification --
\begin{enumerate}
    \item Foliations, as defined above, are `topological' rather than `geometric' objects, i.e. we can make sense of continuous deformations of a foliation, but there isn't a natural notion of a distance between two leaves in a foliation. Stripe patterns, however, are characterized by a discrete translation symmetry normal to the layers {\em on a preferred length-scale}. We thus need to incorporate additional geometric information for foliations to serve as mathematical models of stripe patterns or smectic liquid crystals \cite{mermin1979topological,Machon2019Aspects}. 
    \item As defined above, every leaf in a foliation is a smooth one-dimensional manifold. Such `defect-free' foliations can therefore only represent continuous deformations of ideal patterns. For our purposes, we need to allow for the possibility of defects, i.e. points whose neighborhoods do not inherit a product structure from the local pattern. In particular, we want to allow for isolated convex and concave disclinations.
\end{enumerate}

These issues are addressed by the notion of measured foliations \cite{Thurston1988Geometry}. While we will only need to consider smooth domains $\Omega \subseteq \mathbb{R}^2$ with compact closure in this work, these ideas can be extended to general Riemann surfaces \cite{Hubbard1979Quadratic} and have applications to Teichmuller theory \cite{Strebel1984Quadratic}  and the geometry of surfaces \cite{Thurston1988Geometry,fathi2012thurston}.

\begin{definition} \label{def:foliation} Let $D=\{p_1,p_2,\ldots,p_k\}$ denote  a finite collection of points in a domain $\Omega$ with an associated set of non-negative integers $I = \{n_1,n_2,\ldots,n_k\}$. A measured foliation $\mathscr{F}$ of $\Omega$ with `defects' $p_i$ and corresponsing `indices' $n_i$ is defined by an open cover $\{U_\alpha\}$ of $\Omega \setminus D$, and diffeomorphisms $\varphi_\alpha:U_\alpha \to \mathbb{R}^2$ such that
\begin{enumerate}
    \item The transition functions on overlaps, $\varphi_\alpha\circ \varphi_\beta^{-1}:\varphi_\beta(U_\alpha \cap U_\beta)  \to \mathbb{R}^2$, are given by $(x_1,x_2) \mapsto (f_{\alpha \beta}(x_1,x_2), c_{\alpha \beta} +\epsilon_{\alpha \beta} x_2)$ where $c_{\alpha \beta} \in \mathbb{R}, \epsilon_{\alpha \beta} \in \{+1,-1\}$.
    \item The point $p_i$ is in the closure of $n_i$ (finitely many) leaves of the foliation $\{(U_\alpha,\varphi_\alpha)\}$ of $\Omega \setminus D$. 
\end{enumerate}
\end{definition}

A measured foliation induces a `transverse measure' on curves that do not contain the defects and are everywhere transverse to the leaves of the foliation. This measure is obtained by adding up the (absolute values of the) differences in the $x_2$ values in the coordinate charts along any curve that is transverse to the foliation \cite{Thurston1988Geometry}. This notion is well defined, as one can see from the overlap condition which guarantees that, for a pair of points $p,q$ that belong to two coordinate neighborhoods $U$ and $U'$, we have $|x_2(p)-x_2(q)| = |x_2'(p) - x_2'(q)|$.  This transverse measure is `geometric'. Comparing this transverse measure with the Hausdorff measure $\mathcal{H}^1$ allows us to naturally introduces a length scale to the foliation.

\begin{figure}[ht]
\centering
        \begin{subfigure}[t]{0.2\textwidth}
                \centering
                {\includegraphics[width=0.95\linewidth]{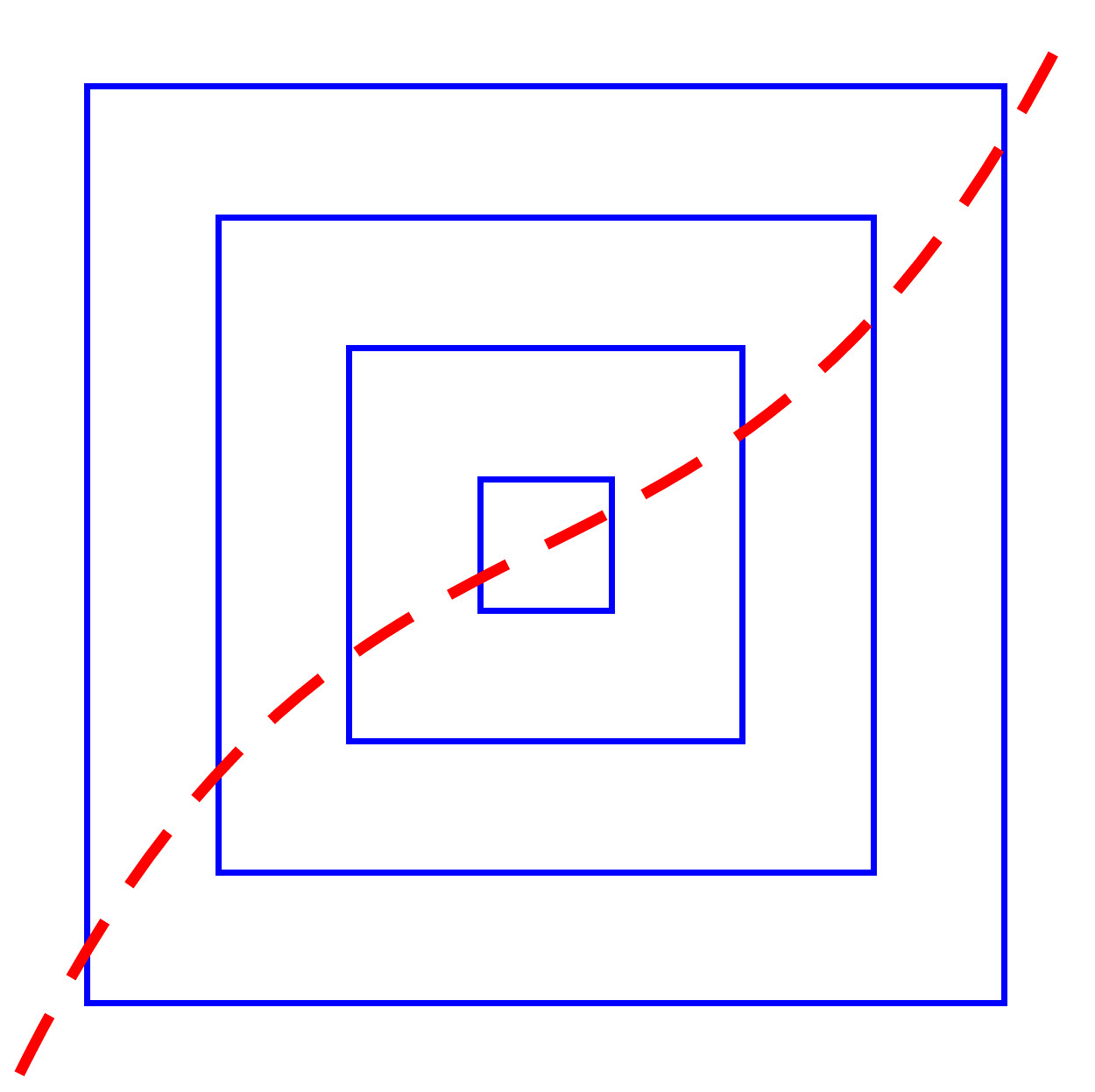}}
                \caption{}
                \label{fig:keq2}
        \end{subfigure}    
         \begin{subfigure}[t]{0.18\textwidth}
                \centering
                {\includegraphics[width=0.95\linewidth]{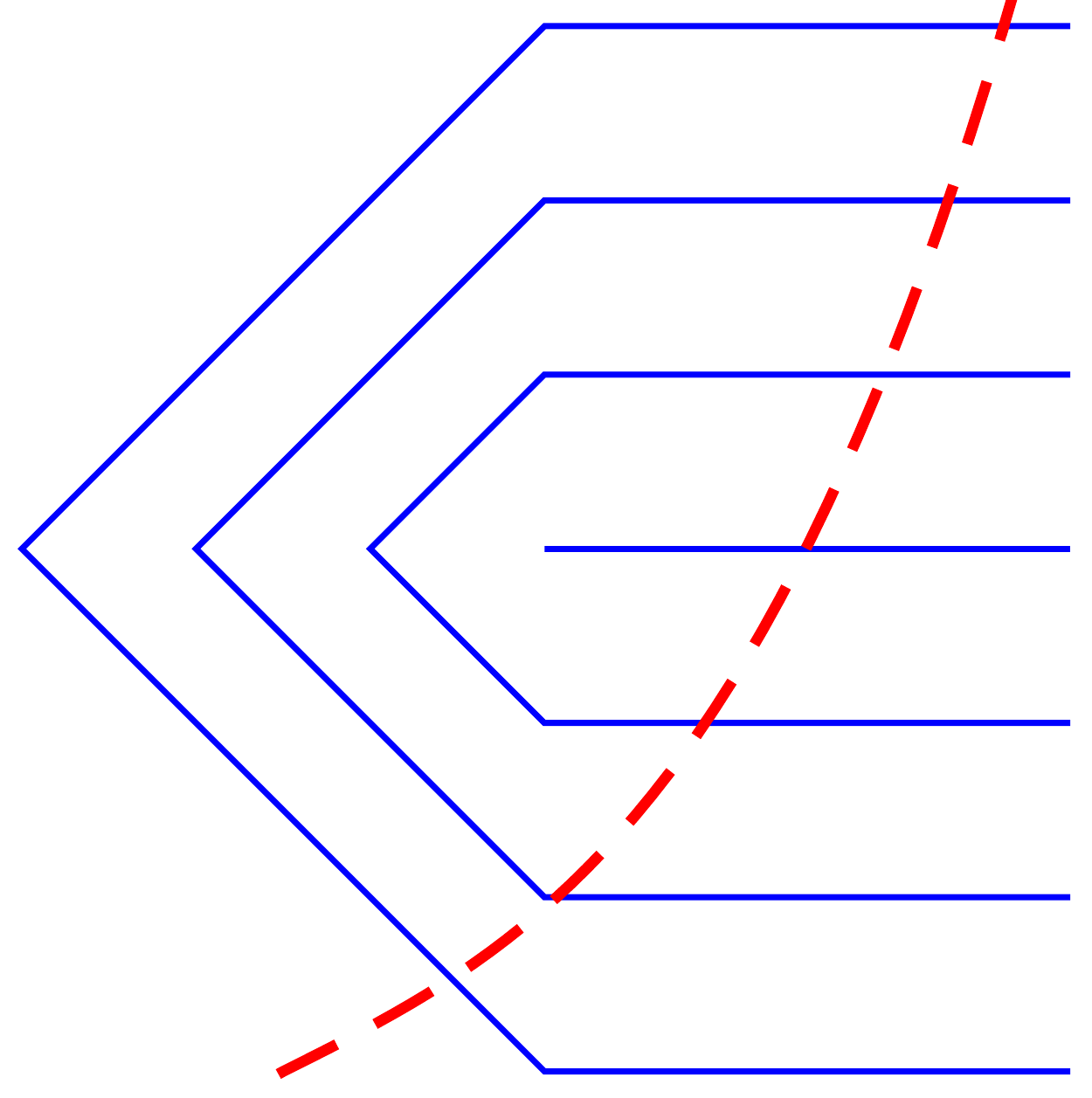}}
                \caption{}
                \label{fig:keq1}
        \end{subfigure}        
         \begin{subfigure}[t]{0.18\textwidth}
                \centering
                {\includegraphics[width=0.95\linewidth]{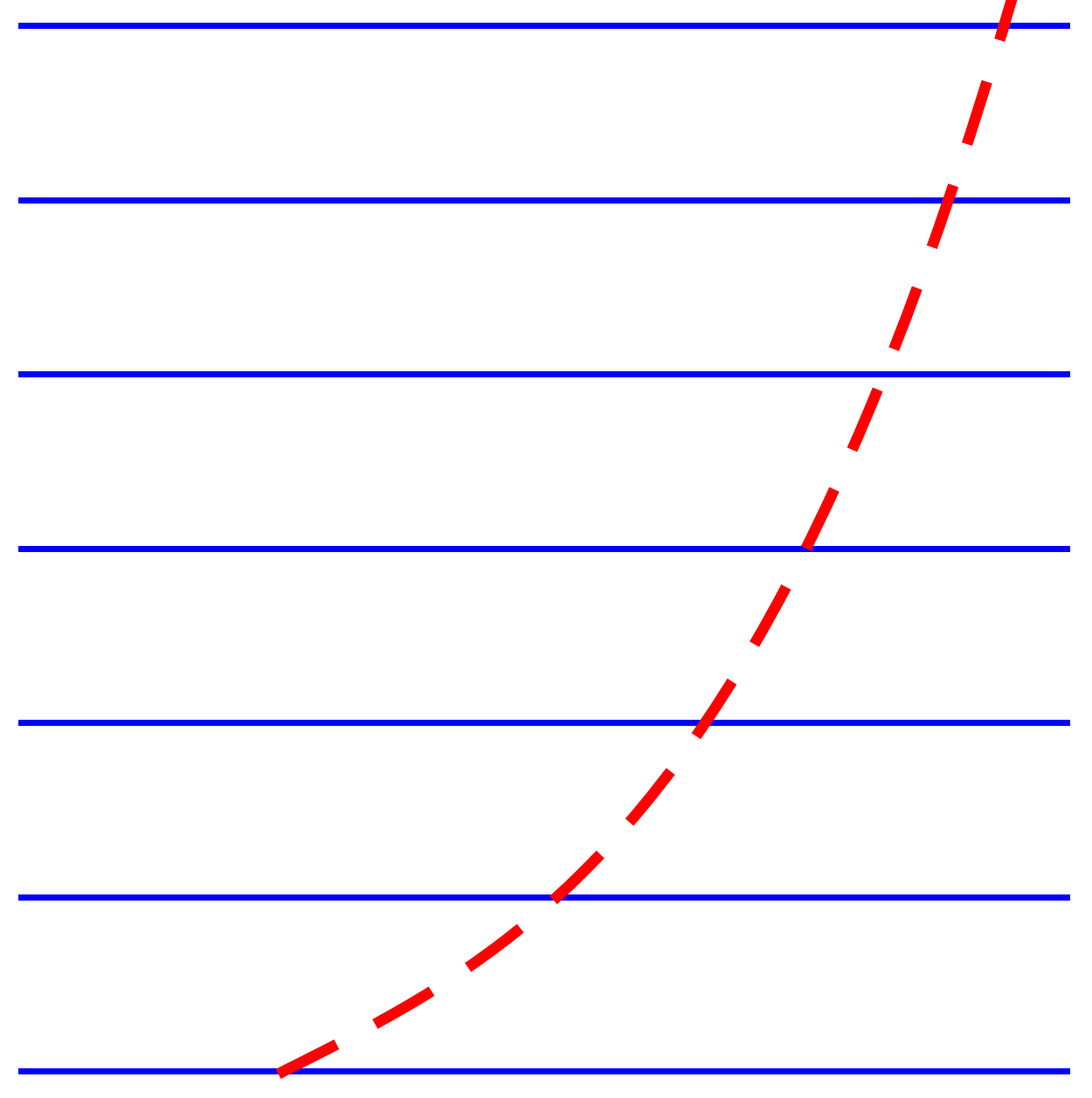}}
                \caption{}
                \label{fig:keq0}
        \end{subfigure}            
        \begin{subfigure}[t]{0.2\textwidth}
                \centering
                {\includegraphics[width=0.95\linewidth]{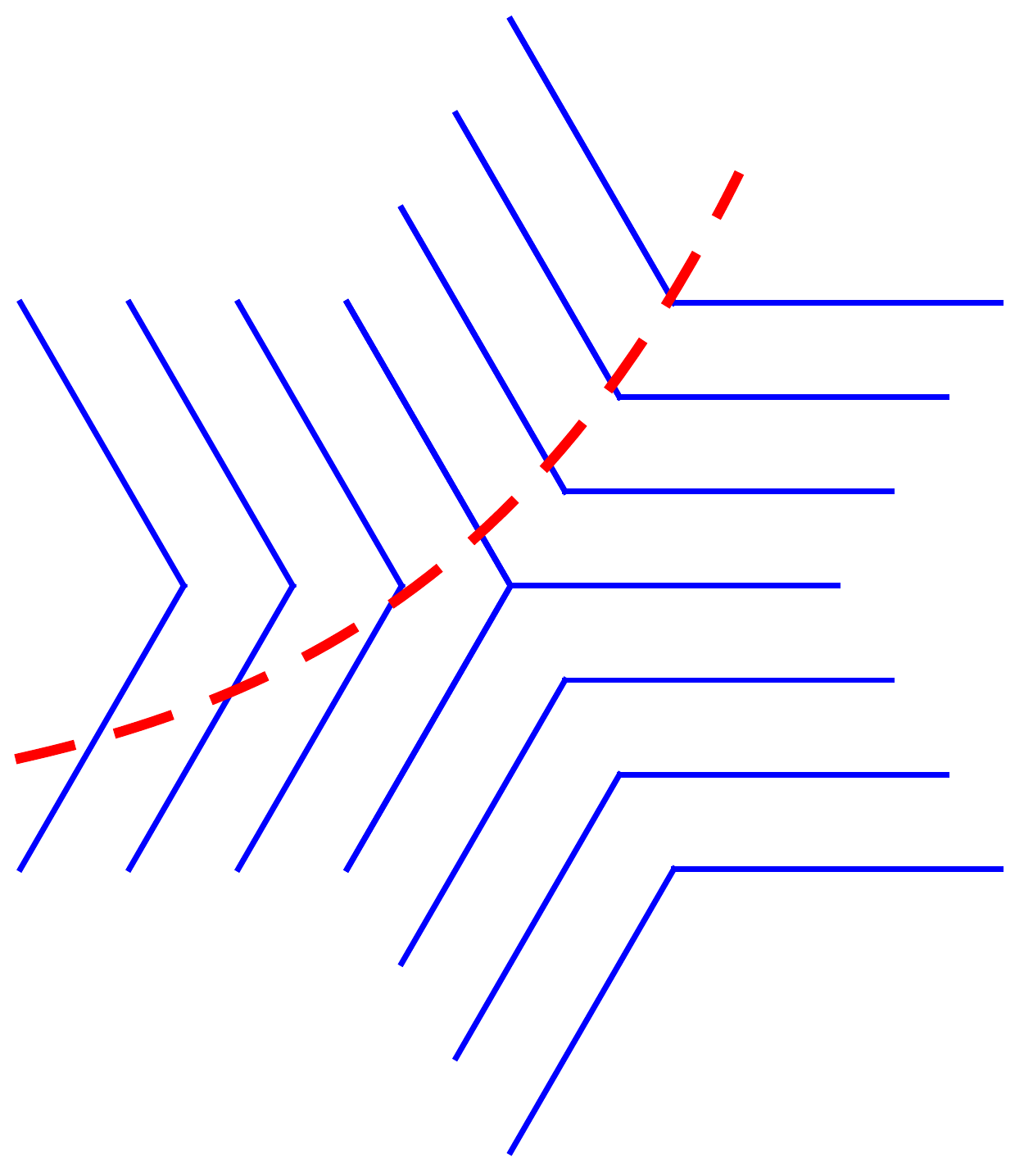}}
                \caption{}
                \label{fig:keqm1}
        \end{subfigure}    
         \begin{subfigure}[t]{0.18\textwidth}
                \centering
                {\includegraphics[width=0.95\linewidth]{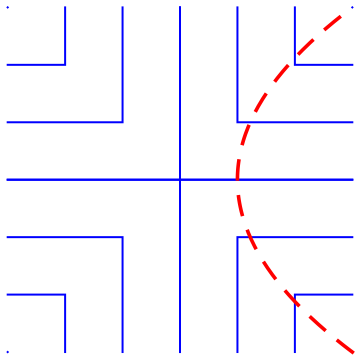}}
                \caption{}
                \label{fig:keqm2}
        \end{subfigure}        
        \caption{ Piecewise linear measured foliations depicting the local structure in the vicinity of isolated defects with indices, respectively, $n_i =0,1,2,3,4$. Note that the foliation in (c) with $n_i = 2$ is non-singular and the rest have an `essential' singular point. In (b)--(e) the dashed curve is transverse to the leaves in the foliation, while this is not the case for (a). In (a), the curve either contains the defect, or necessarily intersects a leaf in the foliation non-transversally. 
        }
        \label{fig:foliations}
\end{figure}

Examples of (piecewise linear) measured foliations on (subsets of) $\mathbb{R}^2$ are  shown in Fig.~\ref{fig:foliations}. The degree of a defect in a measured foliation is defined by the net rotation of a vector normal to the layers that is transported continuously along a small circle centered at the defect (see. Figs.~\ref{convex}~and~\ref{concave}). The degree $d_i$ is related to the index $n_i$ by $d_i = 1-\frac{n_i}{2}$ \cite{poenaru1981some}. The foliations in Figs.~\ref{fig:keq2}~and~\ref{fig:keq1} corresponding, respectively to a vortex and a convex disclination, therefore have degrees $+1$ and $\frac{1}{2}$ respectively. The foliations in Figs.~\ref{fig:keqm1}~and~\ref{fig:keqm2} correspond, respectively, to a concave disclination with $d=-\frac{1}{2}$ and a saddle with $d= -1$. 

\subsection{Phase fields} \label{sec:phases}

Measured foliations serve as models for `multi-valued' phase fields since we can identify the local phase $\theta$ with the second coordinate, i.e. $\varphi_\alpha(p) = (f_\alpha(p),\theta(p))$. The symmetries $\theta \to 2 n \pi \pm \theta$, that necessitate a multi-valued definition of the phase $\theta$, are encoded by the additional requirement that the overlap functions $\varphi_{\alpha}\circ \varphi_{\beta}^{-1}:\varphi_{\beta}(U_{\alpha} \cap U_{\beta})  \to \mathbb{R}^2$ should be of the form 
\begin{equation}
    (x_1,x_2) \mapsto (f_{\alpha \beta}(x_1,x_2), 2 k_{\alpha \beta} \pi +\epsilon_{\alpha \beta} x_2), \qquad k_{\alpha \beta} \in \mathbb{Z}, \epsilon_{\alpha \beta} \in \{+1,-1\}.
    \label{eq:overlap}
\end{equation}
We will henceforth assume this condition, which serves to define {\em quantized measured foliations}. Condition~\eqref{eq:overlap} is equivalent to the previously identified integrability condition for the director fields that describe layered media \cite[Eq.~(4)]{Machon2019Aspects}. 

Note that, as a consequence of condition~\eqref{eq:overlap}, the {\em height field} $h = \cos \theta$ is well defined on $\Omega \setminus D$. Indeed, the overlap condition gives, for $\theta:U \to \mathbb{R}, \theta': U' \to \mathbb{R}$, 
$$
h(p)= \cos \theta(p) = \cos (2 j  \pi \pm \theta'(p)) = \cos(\theta'(p)) = h'(p),
$$
for all points $p \in U \cap U'$ that are in the overlap of two coordinate neighborhoods. $\theta$ and consequently $h$ are constant on the leaves of the foliation $\mathscr{F}$ restricted to $\Omega \setminus D$. We will demand that $h$, as representing a ``physical'' field, should have a continuous extension to $\bar{\Omega}$, so that, in particular, $h$ is well defined at the defects $p_i$, while the `unphysical' and multi-valued phase $\theta$ {\em need not be as regular as} $h$. 

Since $\nabla h = -\sin \theta \, \nabla \theta$ and $\nabla\theta \to - \nabla \theta$ upon traversing a small circuit around a defect $p_i$ with an index $n_i$ that is odd, it is thus necessary that $\sin \theta =0 \Rightarrow \theta = m \pi$ on those leaves of the foliation $\mathscr{F}$ that contain (odd) half-degree defects $p_i$. This is illustrated in Fig.~\ref{fig:textures} which depicts examples of patterns/measured foliations including half degree-defects, namely convex and concave disclinations (See Fig.~\ref{fig:disclination}), which are on necessarily on the layers $\cos \theta = \pm 1$.

\begin{figure}[htbp]
\centering
\begin{subfigure}[htbp]{0.45 \textwidth}
\centering
 \includegraphics[height = 0.8 \textwidth]{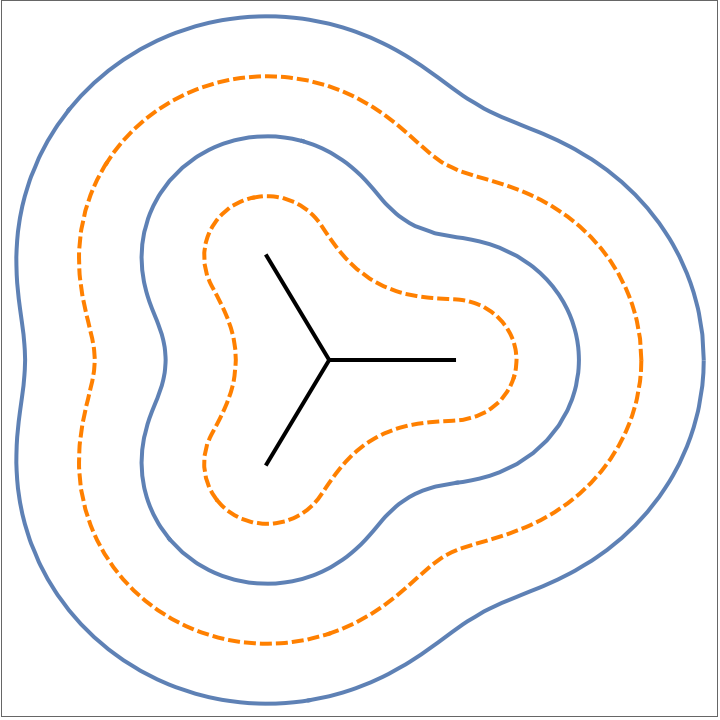} 
 \caption{ }
 \label{fig:3branch}
 \end{subfigure}
\begin{subfigure}[htbp]{0.45 \textwidth}
\centering
 \includegraphics[height = 0.8 \textwidth]{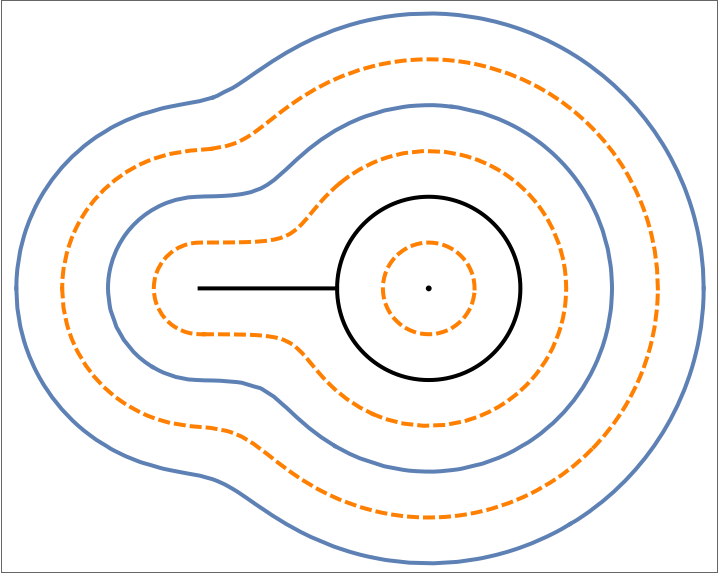} 
 \caption{ }
 \label{keyhole}
 \end{subfigure}
\caption[Stripe Pattern textures]{\label{fig:textures} Examples of possible stripe pattern textures with disclinations. We consider patterns satisfying a Dirichlet boundary condition $\theta = 0$ on $\partial \Omega$}
\end{figure}

Using the illustrative textures in Fig.~\ref{fig:textures}, we discuss the class of admissible mapings $\theta: \Omega \to \mathbb{R}$ that represent phase fields (See also Ref.~\cite{Machon2019Aspects}). We begin by introducing some notation and terminology. 
\begin{enumerate}
\item Let $L_t := h^{-1}(t)$ denote the level sets of the height function for $t \in [-1,1]$. 
\item The {\em texture} of a pattern is defined by the ordered pair $(L_{+1},L_{-1})$. We define $\Gamma_D := L_{+1} \cup L_{-1}$.
\end{enumerate}

By the previous argument, all the defects of the phase field have to live on $\Gamma_D$. and the leaves in the foliation restricted to $\Omega \setminus \Gamma_D$ are all non-singular. In this work, we will impose a Dirichlet boundary condition $\theta = 0$ on $\partial \Omega$. The level sets $L_t$ for $t  \in (-1,1)$ do not self-intersect (because they foliate) and  they also do not intersect the boundary $\partial \Omega$. Consequently, as illustrated in Fig.~\ref{fig:textures}, we have $\Omega \setminus \Gamma_D = h^{-1}((-1,1)) = \sqcup_i S_i$, a finite (disjoint) union of strips $S_i$, that are foliated by simple closed curves. Each strip is homeomorphic to the annulus $S^1 \times (-1,1)$.

We will require that the homeomorphism $\psi_i: S^1\times(-1,1) \to S_i$ extend to a Lipschitz mapping, also denoted by $\psi_i$, from $S^1 \times [-1,1] \to \bar{S_i}$, the topological closure of $S_i$. The boundaries of $S_i$ are then given by the images $\psi_i(S^1 \times \{1\}) \subseteq L_{+1}$ and $\psi_i(S^1 \times \{-1\}) \subseteq L_{-1}$. In particular, these boundaries are the images of Lipschitz maps, and are hence rectifiable and have a well defined tangent direction $\mathcal{H}^1$ a.e. Since the mappings $\psi_i$ are assumed Lipschitz on $S^1 \times [-1,1]$ it follows that the mappings $\psi_i(\cdot,t)$ from $S^1$ to $\bar{S_i}$ converge uniformly as $t \to \pm 1$. However, unlike the mappings for $t \in (-1,1)$, the ``boundary maps" for $t=\pm 1$ need not be one-to-one. This is illustrated in Fig.~\ref{fig:strips}, which show strips $S_i$ for the textures in Fig.~\ref{fig:textures}, with boundaries containing singular leaves of the corresponding foliations. 

We can label the strips in Fig.~\ref{fig:textures} so that $S_1$ is the ``outermost" strip and the index $i$ increases as we move inwards.  The texture in Fig.~\ref{fig:3branch} contains 4 strips labeled $S_1$ through $S_4$. The boundary of $S_4$ contains a $Y$ shaped `triod', i.e a union of three simple arcs that intersect at a common endpoint \cite{Alberti2013Structure}. As illustrated in Fig.~\ref{fig:a}, the triod is obtained as the uniform limit as $t \to 1$ of simple closed curves $L_t$. While the simple closed curves $L_t$ for $t \in (-1,1)$ can be oriented ``counterclockwise",  we cannot orient the triod. Points on the triod correspond to limits, as $t \to 1$, of well-separated points in $S^1$ that approach from ``either side" with opposite orientations.

The texture in  Fig.~\ref{keyhole} contains 6 strips, where the innermost strip $S_6$ is a punctured disk.  The boundary of $S_4$ contains a singular leaf as illustrated in Fig~\ref{fig:b}. The singular leaf $L_{+1}$ contains points where it can be consistently oriented (the circle) {\em and} points where the leaf cannot be oriented (the line segment).

\begin{figure}[htbp]
\centering
\begin{subfigure}[htbp]{0.45 \textwidth}
\centering
 \includegraphics[height = 0.8 \textwidth]{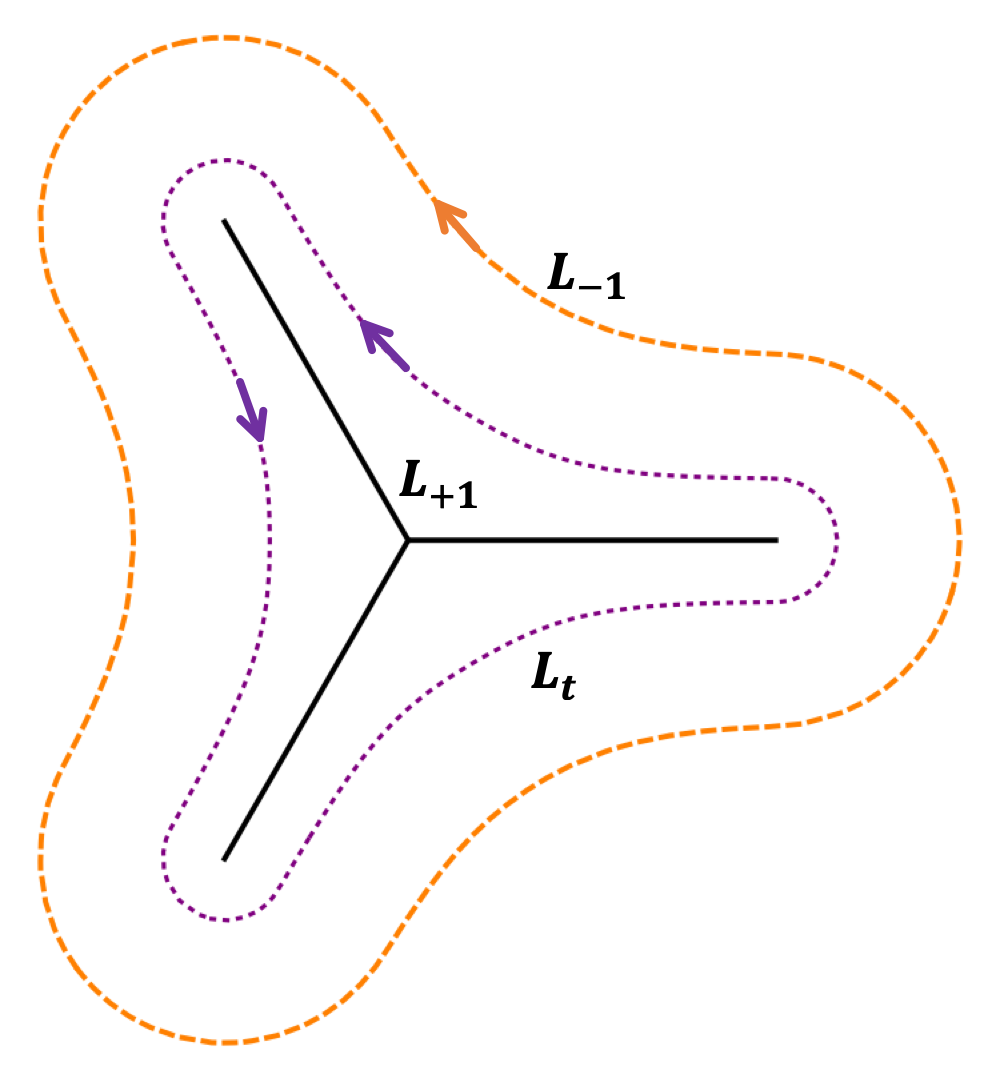} 
 \caption{ }
 \label{fig:a}
 \end{subfigure}
\begin{subfigure}[htbp]{0.45 \textwidth}
\centering
 \includegraphics[height = 0.8 \textwidth]{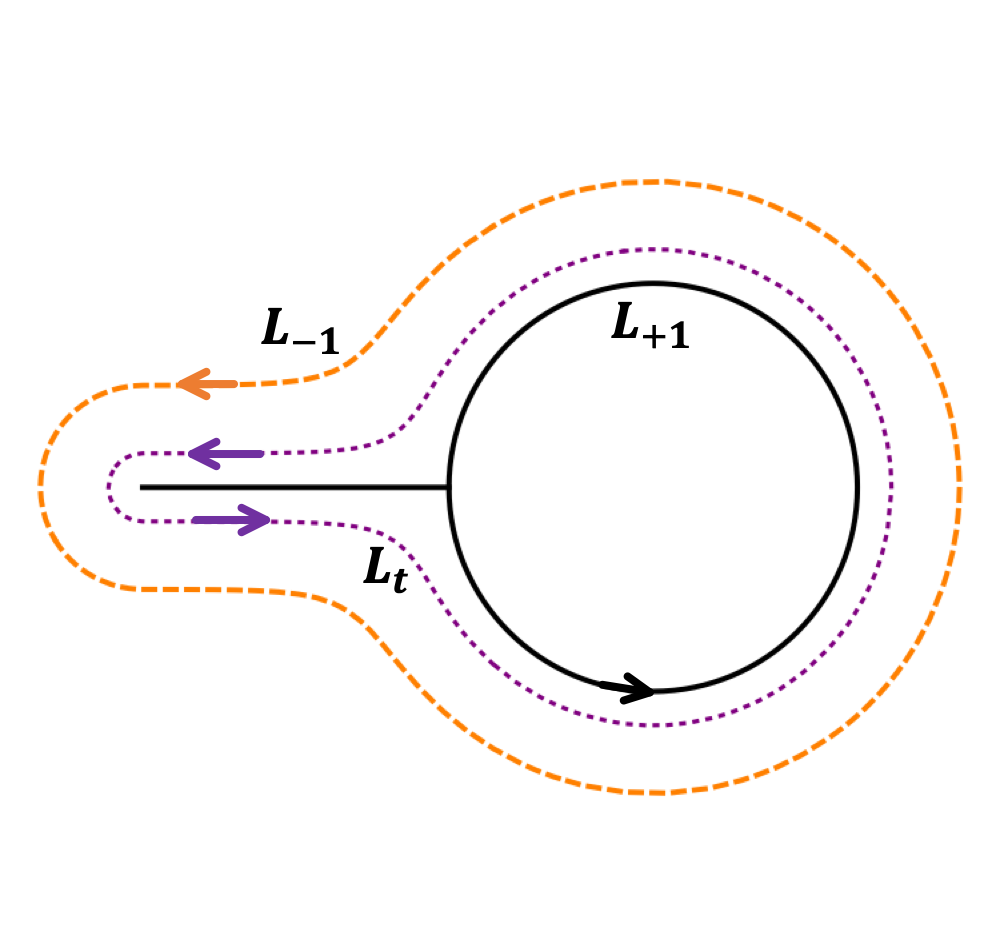} 
 \caption{ }
 \label{fig:b}
 \end{subfigure}
\caption[Stripe Pattern textures]{\label{fig:strips} Annular strips for the textures in Fig.~\protect{\ref{fig:textures}}. The figure illustrates strips whose boundaries contain singular leaves of the corresponding foliation. The figures also depict an oriented leaf $L_t$ for $t \in (-1,1)$ and illustrate the convergence of this `regular' leaf to the (potentially) singular leaves as $t \to \pm 1$. }
\end{figure}

A natural question is what determines is a point on the boundary of a strip is orientable, i.e the parameterization $S^1 \to L_{\pm1}$ is locally one-to-one or if it is not. To address this question we need to consider two distinct notions of the boundary of a set. We begin by elucidating this distinction, following De Giorgi  \cite{DeGiorigi1954teoria_misura}.

\begin{definition}
If $S \subseteq \mathbb{R}^2$ is an open set, the {\em topological boundary} of $S$ is defined by $\partial S = \bar{S} \setminus S$, where $\bar{S}$ is the closure of $S$ in $\mathbb{R}^2$.
If $S$ is a set of finite perimeter, the {\em reduced boundary} of the set, $\partial^*S$, is given by
$$
\int_{S} \mathrm{div}(\mathbf{v}) \,d\mathcal{L}^2 = \int_{\partial^*S} \mathbf{v} \cdot \mathbf{n} \,d\mathcal{H}^1,
$$
where $\mathbf{v}$ is any compactly supported smooth vector field on $\mathbb{R}^2$, $\mathcal{L}^2$ is two-dimensional Lebesgue (area) measure, and $\mathcal{H}^1$ is one-dimensional Hausdorff (length) measure. 
\end{definition}

We note a few properties of the reduced boundary $\partial^* S$.  If $S$ is a set with finite perimeter, then $\partial^* S$ is a $\mathcal{H}^1$ rectifiable set \cite{DeGiorigi1954teoria_misura}. Further, to within an $\mathcal{H}^1$ negligible set, $\partial^* S$ is a countable union of simple closed curves \cite{DeGiorigi1954teoria_misura}. $\partial^*S \subseteq \partial S$ and, in general, the reduced boundary $\partial^* S$ does not agree with the topological boundary $\partial S$. As in the examples above, a point $p \in \partial S$ is also in $\partial^* S$ only if the orientation on the contours $L_t$ for $t \in (-1,1)$ can be continuously extended to an orientation at $p$. In particular, this suggests that $\partial S \setminus \partial^* S$ consists of those boundary points to which we cannot continuously extend the orientation given to the leaves in $S$.

\subsection{Orientability of stripe patterns} \label{sec:orient}
By picking the principal branch $[0,\pi]$ for the range of $\arccos$, we can define a canonical representative for a phase function by $\tilde{\theta} = \arccos(h)$ where $h =\cos \theta$. The canonical representative $\tilde{\theta}$ is a uniquely defined single-valued object and it has the same regularity as $h$ away from $\Gamma_D$. This is the definition of the phase function that is used in Ref.~\cite{Machon2019Aspects}. 

Note however that, even if $\theta : \Omega \to \mathbb{R}$ is a smooth function, the canonical representative $\tilde{\theta}$ is only Lipschitz on neighborhoods of points $p$ where $\sin(\theta(p))=0, \nabla \theta(p) \neq 0$. This observation motivates the quest for a representative of a phase function that is ``maximally regular". One approach to this question is through the `orientability' of a stripe pattern \cite{Knoeppel2015Stripe}.

\begin{definition}
Let $\theta:\Omega \to \mathbb{R}^2$ be a globally Lipschitz phase function and let $\Gamma_D = \{(x_1,x_2) \,| \,\sin(\theta(x_1,x_2)) = 0\}$. Given a point $p \in \Omega \setminus \Gamma_D$, the level curve $\gamma$ of $\theta$ through $p$ is a Jordan curve with an `inner normal' $n_p$ that points into the bounded component of $\mathbb{R}^2\setminus \gamma$. The {\em induced orientation} at $p$ is defined by 
\begin{equation}
o_\theta(p) = \mathrm{sgn}(n_p \cdot \nabla \theta(p)).
\label{eq:orientation}
\end{equation}
\end{definition}

Equivalently, the induced orientation is given by $o_\theta(p) = dx_1\wedge dx_2\left(t_p,\frac{\nabla \theta}{|\nabla \theta|}\right)$ where $dx_1\wedge dx_2$ is the standard orientation on $\mathbb{R}^2$, $t_p$ is the unit tangent vector to a level curve of $\theta$ at $p$ with the standard (counterclockwise) orientation. Since the orientation $o_\theta$ is a smooth function away from $\Gamma_D$, and only takes the values $\pm 1$, it follows that $o_\theta$ is constant on each connected component of $\Omega \setminus \Gamma_D$, i.e. it each (open) strip $S_i$ has a constant orientation.

 If $\theta: \Omega \to \mathbb{R}$ is smooth with non-vanishing gradient, it is immediate that the induced orientation $o_{\theta}$ is a constant on all of $\Omega$. For the canonical representative $\tilde{\theta}$, however, $\nabla h$ changes direction, and consequently $\nabla \theta$ flips across every curve in $\Gamma_D$. The orientation $o_{\tilde{\theta}}$ for the canonical representative $\tilde{\theta}$ is therefore discontinuous at all the points on $\Gamma_D$ where the tangent direction $t_p$ is continuous. The induced orientation is thus not an invariant for the gauge symmetries that modify $\theta$ while preserving $h$.

 A natural measure, $\eta[\theta]$, of the size of the `orientation defect' is the perimeter of the boundary between the sets $o=1$ and $o=-1$, or equivalently, $\eta[\theta] = \frac{1}{2}\|o_\theta\|_{TV}$, the total variation of $o_\theta$. As we noted above, we can have $\cos(\theta_1) = \cos(\theta_2)$ but $\eta[\theta_1] \neq \eta[\theta_2]$. This motivates the question of how to redefine a given phase function $\theta$, using gauge symmetries, to eliminate/minimize the jumps in the induced orientation.  This question is answered by the following result, which asserts the existence of an {\em oriented representative} $\bar{\theta}$  for which $o_{\bar{\theta}} = 1$, i.e. $\eta[\bar{\theta}] = 0$ and there are no jumps in orientation.
 
Let $\theta$ be a phase function as discussed above. We begin by arguing that the orientation on each strip $S_i$ can be ``localized" independent of the orientations of the other strips $S_j$ with $j \neq i$.  Picking an (arbitrary) $t \in (-1,1)$, we have that $L_t \cap S_i$ is a Jordan curve in $\Omega$, as illustrated by the dotted curves in Fig.~\ref{fig:strips}. Since $\Omega$ is connected, it follows that $\Omega \setminus (L_t \cap S_i)$ has two connected components $A_{\pm}$ with $A_+$ containing $L_{+1} \cap \bar{S_i}$ and $A_-$ containing $L_{-1} \cap \bar{S_i}$. Let $\theta^{\pm}_i$ represent the values of the phase function $\theta$ on the (level) sets $L_{\pm1} \cap \bar{S_i}$ respectively.

Define the function $\psi_i: \bar{\Omega} \to \mathbb{R}$ by 
\begin{equation}
\psi_i(p) = \begin{cases} \theta(p) & p \in S_i \\  \theta^+_i & p \in A_+ \setminus S_i \\ \theta^-_i & p \in A_- \setminus S_i \end{cases}
\label{eq:potential}
\end{equation}
$\psi_i$ extends the Lipschitz function $\left. \theta \right|_{S_i}$ to a continuous function on $\bar{\Omega}$  that is constant on each connected component of $\bar{\Omega} \setminus S_i$ and is hence also a Lipschitz function. Further, we have 
\begin{equation}
\nabla \psi_i = \nabla \theta \cdot 1_{S_i}, \qquad \nabla(\cos(\psi_i)) = \nabla(\cos(\theta)) 1_{S_i},
\label{eq:localize}
\end{equation}
where $1_{S_i}$ denotes the indicator function for the set $S_i$.  Finally, we note that either $\psi_i = \theta^+_i$ or $\theta_i^-$ on $\partial \Omega$. In either case, $\psi_i$ is a constant on $\partial \Omega$.

\begin{theorem}
\label{thm:orient}
Let $\Omega \subset \mathbb{R}^2$ be a connected domain and let $\theta: \bar{\Omega} \to $ be a (Lipschitz) phase function satisfying a Dirichlet condition on $\partial \Omega$ with level curves that give a quantized measured foliation, as defined by Eq.~\eqref{eq:overlap}.  

Let $\Gamma_D = \{(x_1,x_2) | \sin(\theta(x_1,x_2)) = 0\}$ and let $S_i, i = 1,2,\ldots,k$ denote the finite collection of annular strips which constitute the connected components of $\Omega \setminus  \Gamma_D$. For every mapping $\epsilon: \{1,2,\ldots,k\} \to \{-1,1\}$, there is a phase function $\theta^\epsilon: \bar{\Omega} \to \mathbb{R}$ that is gauge-equivalent to the phase $\theta$,  that satisfies a Dirichlet boundary condition on $\partial \Omega$, and that has an orientation $o_{\theta^\epsilon}$ equal to $\epsilon(i)$ on the strip $S_i$.
\end{theorem}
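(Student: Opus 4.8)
\section*{Proof proposal}

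The plan is to flip the sign of $\nabla\theta$ on each strip, independently, using the localized potentials $\psi_i$ of Eq.~\eqref{eq:potential}. For $i=1,\dots,k$ put $\sigma_i:=\epsilon(i)\,o_\theta|_{S_i}\in\{-1,+1\}$, which is well defined because $o_\theta$ is constant on each strip; the idea is that replacing $\nabla\theta$ by $\sigma_i\nabla\theta$ on $S_i$ turns the induced orientation there into $\sigma_i\,o_\theta|_{S_i}=\epsilon(i)$. Because Eq.~\eqref{eq:localize} gives $\nabla\psi_i=\nabla\theta\,\mathbf{1}_{S_i}$, I would take as candidate
\[
\Theta:=\sum_{i=1}^{k}\sigma_i\,\psi_i,\qquad \theta^\epsilon:=\Theta-\Theta|_{\partial\Omega}.
\]
Since $\Theta$ is a finite sum of globally Lipschitz functions it is single-valued and Lipschitz on $\bar\Omega$ with no monodromy to check, and each $\psi_i$ (hence $\Theta$) is constant on $\partial\Omega$, so $\theta^\epsilon$ is well defined, Lipschitz, and vanishes on $\partial\Omega$. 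By Eq.~\eqref{eq:localize}, on each strip $S_j$ we have $\nabla\theta^\epsilon=\sum_i\sigma_i\nabla\psi_i=\sigma_j\nabla\theta$, so $\theta^\epsilon=\sigma_j\theta+c_j$ there for some constant $c_j$.

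The orientation statement is then immediate bookkeeping. On $S_j$ the level curves of $\theta^\epsilon=\sigma_j\theta+c_j$ coincide with those of $\theta$, so the inner normals $n_p$ of Eq.~\eqref{eq:orientation} are unchanged, while $\nabla\theta^\epsilon=\sigma_j\nabla\theta$; therefore $o_{\theta^\epsilon}=\mathrm{sgn}(n_p\cdot\sigma_j\nabla\theta)=\sigma_j\,o_\theta=\epsilon(j)$ on $S_j$, as required.

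It remains to show that $\theta^\epsilon$ is gauge-equivalent to $\theta$, i.e.\ that it defines the same quantized measured foliation (Eq.~\eqref{eq:overlap}) and the same height field $h=\cos\theta$. Since on $S_j$ we have $\theta^\epsilon=\sigma_j\theta+c_j$ with $\sigma_j=\pm1$, this amounts exactly to the arithmetic claim $c_j\in 2\pi\mathbb{Z}$, for then $\cos\theta^\epsilon=\cos(\pm\theta+c_j)=\cos\theta$ on every strip. To verify it, consider two strips $S_j,S_p$ sharing a leaf $\ell\subseteq\Gamma_D$, on which $\theta\equiv m\pi$ for some integer $m$. Continuity of $\theta^\epsilon$ across $\ell$ forces $\sigma_j m\pi+c_j=\sigma_p m\pi+c_p$, whence $c_j-c_p=(\sigma_p-\sigma_j)m\pi$; since $\sigma_p-\sigma_j\in\{0,\pm2\}$ this difference lies in $2\pi\mathbb{Z}$, so $c_j\equiv c_p\pmod{2\pi}$. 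The outermost strip $S_1$ meets $\partial\Omega$, where $\theta\equiv0$ and $\theta^\epsilon=0$, so $c_1=0$; propagating the congruence through the collection of strips then yields $c_j\in 2\pi\mathbb{Z}$ for all $j$.

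The one place that genuinely requires care --- and that I expect to be the main obstacle --- is the legitimacy of this propagation together with the regularity of $\theta^\epsilon$ at the singular leaves. Propagating a congruence ``from $S_1$ outward'' presumes that the adjacency graph of the strips (strips as vertices, shared leaves of $\Gamma_D$ as edges) is connected and has no cycle around which the constants could fail to close up; this is precisely where simple connectivity of $\Omega$ enters, since each interior leaf of $\Gamma_D$ is a closed set separating $\Omega$, so the nerve of the decomposition $\Omega\setminus\Gamma_D=\sqcup_i S_i$ is a tree. (With the explicit potential $\Theta=\sum_i\sigma_i\psi_i$ single-valuedness is automatic, so this tree property is needed only to organize the congruence argument.) Finally I would check that continuity, hence the global Lipschitz bound, is not lost at the singular leaves --- the triods of Fig.~\ref{fig:a} and the mixed circle-plus-segment leaf of Fig.~\ref{fig:b} --- where the boundary parameterizations cease to be injective; this is harmless because $\theta$, and therefore each $\psi_i$, is constant ($\equiv m\pi$) on the entire leaf irrespective of its branching, so the one-sided limits of $\theta^\epsilon$ agree there and the Lipschitz estimate for the finite sum $\sum_i\sigma_i\psi_i$ passes to $\bar\Omega$.
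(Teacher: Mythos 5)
Your construction is identical to the paper's: with $\sigma_i=\epsilon(i)\gamma(i)$ (the paper writes $\gamma(i)$ for $o_\theta|_{S_i}$), your $\Theta-\Theta|_{\partial\Omega}$ is exactly the paper's $\theta^\epsilon=\sum_i\epsilon(i)\gamma(i)\psi_i-c_0$, and your orientation bookkeeping matches the paper's verbatim. Where you genuinely diverge is the verification of gauge equivalence. The paper argues that, since $\cos(\cdot)$ is even, $\nabla(\cos\theta^\epsilon)=\nabla(\cos\theta)$ on each strip, and then concludes $\cos\theta^\epsilon=\cos\theta$ from continuity together with equality of derivatives on the dense open set $\Omega\setminus\Gamma_D$. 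As you correctly sense, evenness alone does not give this: on $S_j$ one has $\theta^\epsilon=\sigma_j\theta+C_j$, where $C_j$ is a priori only a multiple of $\pi$ (each $\psi_i|_{S_j}$ and $c_0$ is built from the values $\theta_i^\pm\in\pi\mathbb{Z}$), and $\cos(\pm\theta+\pi)=-\cos\theta$, so the paper's one-line appeal to evenness tacitly uses the fact that every $C_j$ lies in $2\pi\mathbb{Z}$. That is precisely the arithmetic claim you isolate and prove, by propagating the congruence $C_j\equiv C_p\pmod{2\pi}$ across shared leaves (where $\theta\in\pi\mathbb{Z}$ and $\sigma_p-\sigma_j\in\{0,\pm2\}$) starting from the boundary strip, with connectedness of $\Omega$ guaranteeing the propagation reaches every strip. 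So your proof is correct, follows the paper's construction, and in fact supplies the justification of the gauge-equivalence step that the paper leaves implicit; your two flagged caveats (connectedness of the strip adjacency graph, and continuity across the non-injectively parameterized singular leaves) are genuine points of care but are both harmless under the paper's standing assumptions, for exactly the reasons you give.
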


\begin{proof} Let $\gamma: \{1,2,\ldots,k\} \to \{+1,-1\}$ denote the mapping defined by $\gamma(i)$ is equal to the orientation $o_\theta$ of the given phase function on the strip $S_i$, i.e. for all $p \in S_i$, we have $\mathrm{sgn}(n_p \cdot \nabla \theta(p)) = \gamma(i)$ (See Eq.~\eqref{eq:orientation}).

Define the function $\theta^\epsilon$ by
$$
\theta^\epsilon = \left[\sum_{i=1}^k \epsilon(i) \gamma(i) \psi_i \right] - c_0,
$$
where the functions $\psi_i$ are as defined in Eq.~\eqref{eq:potential} and $c_0$ is a constant that is determined by requiring $\theta^\epsilon =0$ on $\partial \Omega$. Consequently, $\cos(\theta^\epsilon) = \cos(\theta)$ on $\partial \Omega$.

It follows from Eqs.~\eqref{eq:orientation} and \eqref{eq:localize} that the orientation of $\theta^\epsilon$ on the strip $S_i$ is given by  $\mathrm{sgn}(n_p \cdot \nabla \theta^\epsilon(p)) = \mathrm{sgn}(\epsilon(i) \gamma(i)  (n_p \cdot \nabla \theta(p)) = \epsilon(i) [\gamma(i)]^2 = \epsilon(i)$. Also, since $\cos(\cdot)$ is an even function, it follows that $\nabla (\cos(\theta^\epsilon)) = \nabla(\cos(\theta))$ on $S_i$, {\em independent} of $\gamma(i)$ and $\epsilon(i)$. The functions 
$h = \cos(\theta), h^\epsilon = \cos(\theta^\epsilon)$ are both continuous on $\bar{\Omega}$ and have equal derivatives on $\Omega \setminus \Gamma_D$, a dense open set, so it follows that $\cos(\theta^\epsilon) = \cos(\theta)$ on $\bar{\Omega}$. This implies $\theta$ and $\theta^\epsilon$ are gauge equivalent.
 
\end{proof}

\begin{figure}[htbp]
\centering
\begin{subfigure}[htbp]{0.45 \textwidth}
\centering
 \includegraphics[height = 0.8 \textwidth]{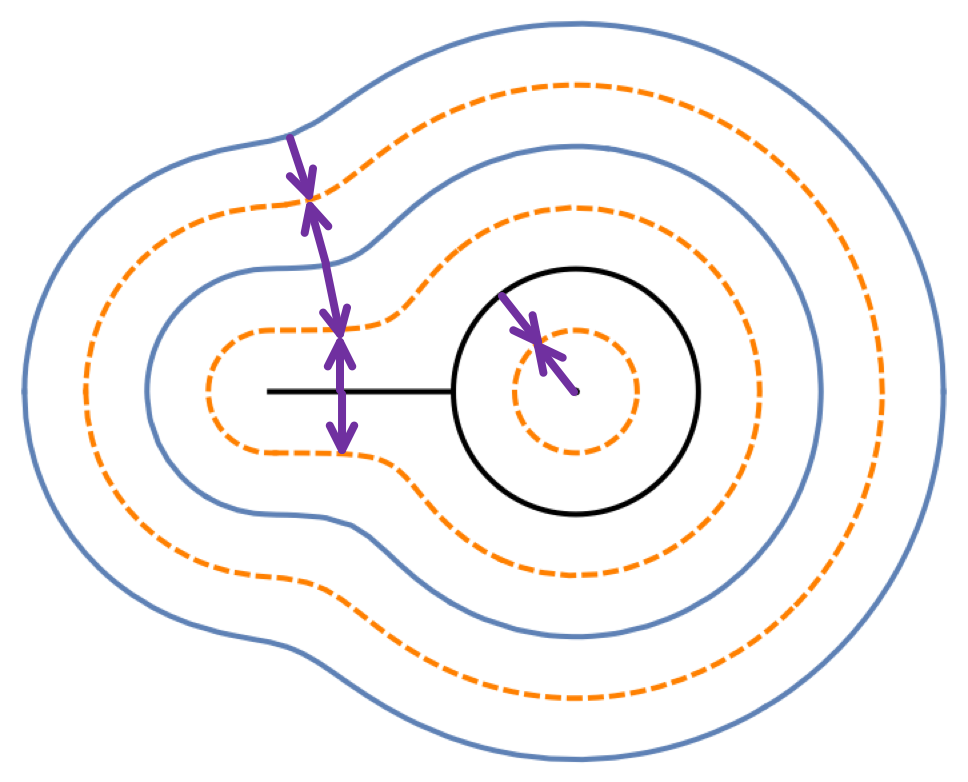} 
 \caption{The canonical phase field $\tilde{\theta}$. }
 \label{f:canonical}
 \end{subfigure}
\begin{subfigure}[htbp]{0.45 \textwidth}
\centering
 \includegraphics[height = 0.8 \textwidth]{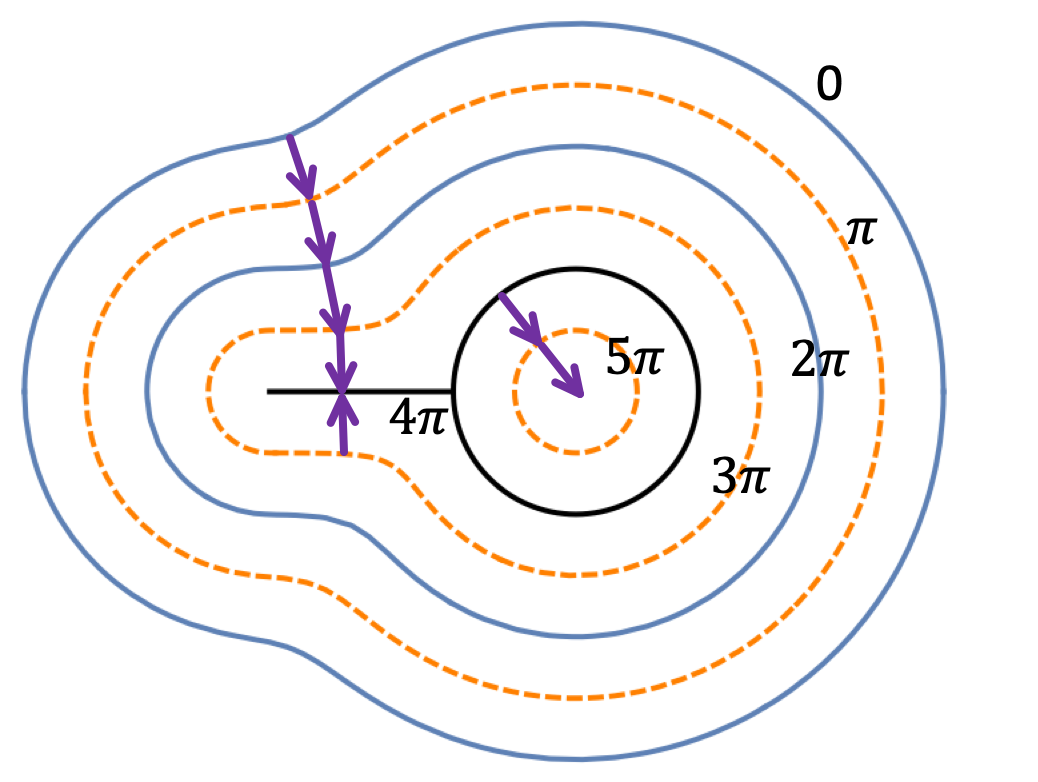} 
 \caption{The oriented phase field $\bar{\theta}$ }
 \label{f:oriented}
 \end{subfigure}
\caption[Stripe Pattern textures]{\label{fig:orientation} The canonical and the oriented phase fields for a given texture. (a) The solid contours correspond to $\tilde{\theta} = 0$, the dashed contours correspond to $\tilde{\theta} = \pi$, and $\nabla{\tilde{\theta}}$ jumps across every curve in $\Gamma_D$. (b) This figure illustrates the conclusion of Thm.~\protect{\ref{thm:orient}}. Using gauge symmetry, we can choose a phase function such that the induced orientation is always $+1$. The values of the phase at the various contours are indicated on the figure. }
\end{figure}

An immediate consequence of Thm.~\ref{thm:orient} is that, for any given texture, or equivalently, for any given phase function $\theta$, there is an equivalent phase function $\bar{\theta}$ that has a constant orientation on $\Omega$. For points in $\Gamma_D$ that are assigned a consistent orientation by the level sets $L_t$ from `either side', it follows that $n_p \cdot \nabla \bar{\theta}$ does not change sign, so there is no flip in the gradient of $\bar{\theta}$. For points in $\Gamma_D$ where the two sides give inconsistent orientations, the fact that $n_p \cdot \nabla \bar{\theta}$ has the same sign implies that $\nabla \bar{\theta}$ necessarily has to flip.  The advantage of picking the oriented representative $\bar{\theta}$ is that the set on which $\nabla \bar{\theta}$ jumps is as small as possible, and is associated with the disclinations in the underlying texture. 

This discussion is illustrated in Fig.~\ref{fig:orientation}. In particular, we want to emphasize the distinction between the induced orientation for a phase function, and the (non-) orientability of a point defect (See Fig.~\ref{fig:disclination}). For the phase field $\bar{\theta}$ shown in Fig.~\ref{f:oriented}, the induced orientation $o_{\bar{\theta}} =1$ on all the strips, while there is a jump in $\nabla \bar{\theta}$ across the line segment contained in the singular leaf $\bar{\theta} = 4 \pi$. This jump is unavoidable and is associated with the disclinations, i.e. the {\em non-orientable point defects} in the pattern.

\section{Energetics} \label{sec:rcn}

The starting point for a variational analysis of stripe patterns is the modulation ansatz in Eq.~\eqref{eq:modulation}. Introducing a small parameter $\epsilon \ll 1$, that sets the scale for $|\nabla k|$, we have $\psi \approx F_{k}(\epsilon^{-1}\Theta(\epsilon  x_1, \epsilon x_2))$, where $\epsilon \ll 1$, $\Theta$ is the `slow' phase,  $X_1 = \epsilon x_1, X_2= \epsilon x_2$ are the `slow' spatial variables and the local wavevector  $k = \nabla \theta = (\Theta_{X_1},\Theta_{X_2})$. Substituting this ansatz into a microscopic equations describing stripe patterns, we can derive an order parameter equation for the behavior of the phase $\Theta$, using the method of multiple scales. This was originally done by Cross and Newell \cite{cross1984convection}. 

The translation invariance $\theta(x_1,x_2) \to \theta(x_1,x_2) + \theta_0$ implies that the linearization of the dynamics in~\eqref{eq:sh} about a modulated stripe pattern has a non-trivial kernel, and the corresponding solvability condition yield, at lowest order, the {\em (unregularized) Cross-Newell equations} \cite{cross1984convection}, a gradient flow, that describes the macroscopic dynamics for the wave-vector field $k(x,t)$. These equations support shock formation and need to be regularized by higher order effects in the small parameter $\epsilon$ \cite{newell1996defects,ercolani2000geometry}.  

An alternative to employing the Fredholm alternative/solvability is to directly compute an effective energy $\mathcal{E}[k(x,t)]$ by averaging the energy~\eqref{eq:sh_energy} over all the microstates that are consistent with a given macroscopic field $k(x,t)$ \cite{newell2017elastic}. This is equivalent to averaging over the phase shift $\theta_0 \in [0,2 \pi]$, and yields the energy functional
\begin{equation}
\mathcal{E}_{RCN} = \int_{\Omega} \left((\nabla \cdot k)^2 + W(k) \right) dx
\label{eq:rcn}
\end{equation}
where $dx = dx_1 dx_2$ is two dimensional Lebesgue measure and $W$ is a nonconvex ``well potential" in $k$.  This \emph{regularized Cross-Newell}
(RCN) energy consists of two parts: a non-convex
functional of the gradient (the CN part) plus a convex regularizer whose energy density if  given by a quadratic 
function of the Hessian matrix $\nabla k = \nabla \nabla \theta$.

As discussed above, in order to describe stripe patterns, the rotated wavevector $k^{\perp}$ should be tangent to the leaves of a {\em measured foliation} \cite{poenaru1981some}. Consequently, a necessary condition for $k$ to describe a smooth stripe pattern is that $\nabla \times k = 0$. With the substitution $k = \nabla \theta$ (equivalent to the constraint $\nabla \times k = 0$), and the approximation $W(k) = (1-|k|^2)^2$, the RCN energy turns out to be closely related to the {\em Aviles-Giga} functional \cite{aviles1987mathematical}. 

Indeed, since $\int \left[(\Delta \theta)^2 - (\nabla \nabla \theta)^2\right] dx$ is a null Lagrangian, we get the same variational equations from the RCN energy with $W(k) = (1-|k|^2)^2$ and the Aviles-Giga energy functional, defined by
\begin{equation} \label{eq:gl}
{\mathcal{E}}_{AG} = \epsilon \int_\Omega \left(\nabla^2_{X}
 \Theta \right)^2 dX
      + \frac{1}{\epsilon} \int_\Omega (1 - |\nabla_{X} \Theta|^2)^2\, dX\ ,
\end{equation}
as expressed in terms of \emph{slow} variables stemming from
the modulational ansatz mentioned above: $X = (X_1,X_2) =
\left(\epsilon x_1, \epsilon x_2\right); \Theta = \epsilon \theta$. The parameter $\epsilon$ is the ratio between the wavelength of the pattern and the length scale at which the patterns deviate from being straight, parallel rolls. 

The energy functional~\eqref{eq:gl} only depends on the gradient and higher derivatives of $\Theta$, and is consequently invariant under the continuous symmetry $\Theta \rightarrow \Theta + \delta$, in contrast to the discrete  symmetry $\psi \rightarrow -\psi$ of the microscopic model  which corresponds to the discrete symmetries $\Theta \rightarrow \Theta + 2 k \epsilon \pi, k \in \mathbb{Z}$ and $\Theta \rightarrow -\Theta$ that follow from Eq.~(\ref{eq:F-symm}).

The variational problem associated with the the Aviles-Giga energy (\ref{eq:gl})
also arises in other physical contexts (unrelated to pattern
formation) \cite{aviles1987mathematical}. There is a considerable body of work on the Aviles-Giga variational problem, including ansatz-free lower bounds \cite{Jin2000Singular}, compactness in BV of the set $\{\Theta | \mathcal{E}_{AG}(\Theta) \leq C \}$ \cite{desimone2001compactness,ambrosio1999line} and sharp upper bounds \cite{conti2007sharp,poliakovsky2007upper}. Taken together, these results imply that (i) the energy in (\ref{eq:gl}) is scaled ``correctly" in the sense that the energy of the dominant singularities is $O(1)$ as $\epsilon \rightarrow 0$, (ii) The space of BV functions is (conjecturally) the right setting for looking at limits of families of minimizers as $\epsilon \rightarrow 0$, and (iii) The typical singularities for the limiting configurations  are 1-dimensional corresponding to a jump in the gradient.

Another natural energy functional for stripe patterns is given by the correspondence with smectic liquid crystals. In this correspondence, the director field $n$ is related to the phase field $\theta$ by $n = \pm \frac{\nabla \theta}{|\nabla \theta|}$. $n$ is thus well defined on the set $|\nabla \theta| \neq 0$. The `bending energy' is given by an appropriate norm of the curvature of the phase contours $\kappa = \nabla \cdot n$. In particular, we can replace the Hessian term in~\eqref{eq:gl} by the Oseen-Frank energy
\begin{equation}
    \label{eq:oseen-frank}
    \mathcal{E}_{OF} = \int \left|\nabla \cdot \left(\frac{\nabla \theta}{|\nabla \theta|} \right)\right|^2 dx
\end{equation}
In a region where $|\nabla \theta| =1$, it follows that $\kappa = \Delta \theta$ so the Oseen-Frank energy is identical to the Regularized Cross-Newell energy. These energies are however not equivalent in the vicinity of defects. For instance, the phase function $\theta = k_0^2 (x_1^2+x_2^2)$, corresponding to a target pattern has finite RCN and AG energy. However, the corresponding director field $n$ has $\nabla \cdot n = (x_1^2+x_2^2)^{-\frac{1}{2}}$ is not in $L^2_{\mathrm{loc}}$ on any neighborhood of the origin. Interpreted `strictly', the Oseen-Frank energy would seem to preclude the appearance of target patterns. For this reason, we will henceforth work with the RCN energy, and we expect that these results also inform the expectations for more realistic, physically motivated energy functionals since $|\nabla \theta| \approx 1$ over `large' sets in practice.

\subsection{Non-orientability and SBV phase fields} 

We now turn to the focus of this paper. We will only consider a fixed pattern wavelength and fixed domain $\Omega$, rather than a sequence of problems corresponding to limit $\epsilon \to 0$. Our primary interest is to develop a variational description that allows for disclinations, which are indeed seen in Rayleigh-B\'{e}nard convection far from threshold. As we discuss above, defects in natural patterns can be viewed as points with a non-trivial monodromy for $k=\nabla \theta$, that require a (locally) two-sheeted domain for a consistent and jump-free definition of $\theta$ and $k$ \cite{newell1996defects,ercolani2000geometry}. 

These point defects typically occur as convex-concave pairs corresponding to $+\frac{1}{2}$ and $-\frac{1}{2}$ degree singularities. Such defects are common in liquid crystals, among other physical settings, where the non-orientability is a consequence of the head-tail symmetry in the molecules. Mathematical models that describe the physical defects therefore have to contend with issues related to  orientability  \cite{ball2011orientability,bauman2012analysis,bedford2016function}. We will adapt some of these ideas for describing strip patterns in what follows.

 \begin{remark} \label{rem:phases}
Since the phase function $\theta$ is Lipschitz, Rademacher's theorem implies that $k=\nabla \theta$ is defined and equals the (density of the) distributional derivative $D\theta$ a.e. Further, in the ``nice" setting considered above $k$ is smooth on the strips $S_i$ and potentially has jumps on the rectifiable set $\Gamma_D$. We can relax the smoothness requirement to $k \in SBV^p(\Omega,\mathbb{R}^2)$, i.e. by requiring that the distributional gradient $Dk$ be a (symmetric) matrix-valued measure, that can be decomposed, $Dk = \nabla \nabla \theta + Jk$, into mutually singular measures, where $\nabla \nabla \theta \in L^p$ gives the part that is absolutely continuous with respect to 2-dimensional Lebesgue measure $\mathcal{L}^2$ and $Jk = \llbracket k\rrbracket \otimes \nu \mathcal{H}^1 \mres \Delta_k$ is concentrated on $\Delta_k \subseteq \Gamma_D$, the jump set of $k$. The index $p \geq 1$ ($p=2$ for us) encodes the control from having a finite bending energy for the pattern. 
\end{remark}

A natural approach to generalize the energy functionals defined for $W^{2,2}$ (i.e. defect-free) patterns, for instance the energies defined in Eqs.~\eqref{eq:rcn}, \eqref{eq:gl}~and~\eqref{eq:oseen-frank}, to SBV functionals that allow for the possibility of defects consists of two steps: (i) replacing the bending energy, initially given as an integral over the entire domain $\Omega$, by an integral over the complement of the jump set $\Omega \setminus \Delta_k$, and (ii) constraining the admissible phase fields so that the jump set $\Delta_k \subseteq \Gamma_D$, the set on which $\sin \theta = 0$. 

As an illustration, the generalization of the variational problem for the Cross-Newell energy is
\begin{equation}
\label{rcn_sbv}
    \theta^* = \arg \min_{\theta \in \mathcal{A}} \left[\int_{\Omega \setminus \Delta_k} (\Delta \theta)^2 d\mathcal{L}^2 + \int_\Omega (|\nabla \theta|^2 -1 )^2 d\mathcal{L}^2 \right],
\end{equation}
where $\Delta \theta$ is the (distributional) Laplacian of $\theta$, a $L^2_{\mathrm{loc}}$ function away from the defect set $\Delta_k$, and 
the  of admissible class  phase functions, $\mathcal{A}$, is defined by
\begin{equation}
\label{admissible_class}
     \mathcal{A} = \{\theta \in W_0^{1,\infty}, k:=D\theta \in SBV^2(\Omega,\mathbb{R}^2), \, \Delta_k \subseteq \Gamma_D\},
\end{equation}
where $\Delta_k$ is the jump set of $k$, and $\Gamma_D$ is the set on which $\sin \theta = 0$.

The first condition defining $\mathcal{A}$ is the requirement that $\theta$ be Lipschitz and satisfy a Dirichlet boundary condition on $\partial \Omega$. To elucidate the import of the second condition let us recall the definition of the space $SBV^2(\Omega, \mathbb{R}^2)$: 
\begin{enumerate}
    \item $k := D\theta \in SBV(\Omega,\mathbb{R}^2)$ if (the density) $k \in L^1_{\mathrm{loc}}(\Omega,\mathbb{R}^2)$ and the distributional derivative $Dk = DD\theta$ is a (symmetric matrix-valued) finite Radon measure.
    \item The measure $Dk := D^{ac} k + D^s k$ splits into two mutually singular pieces. $D^{ac} k$ is absolutely continuous with respect to Lebesgue measure $\mathcal{L}^2$.  $D^s k$ is singular with respect to Lebesgue measure and is purely a jump measure, i.e it has no Cantor part, and it is supported on a set with finite 1-dimensional Hausdorff measure. Specifically, $D^s k = \llbracket k \rrbracket \otimes \nu \mathcal{H}^1 \mres \Delta_k$, where the jump set $\Delta_k$ satisfies $\mathcal{H}^1(\Delta_k) < \infty \Rightarrow \mathcal{L}^2(\Delta_k) = 0$. $\nu$ is an $\mathcal{H}^1$ measurable unit normal to $\Delta_k$ and $\llbracket k \rrbracket = k_+ - k_-$ is the difference of the one-sided limits of $k$ on the jump set $\Delta_k$, where the $\pm$ sides are defined by the choice of the normal $\nu$.
    \item $k \in SBV^2(\Omega,\mathbb{R}^2)$ if $k \in SBV(\Omega,\mathbb{R}^2)$ and the density for the absolutely continuous part $D^{ac}k$ (with respect to $\mathcal{L}^2$), which will be denote by $\nabla k$ or $\nabla \nabla \theta$, is in $L^2(\Omega,M_{2\times 2}^{sym})$. 
\end{enumerate}
The final condition $\Delta_k \subseteq \Gamma_D$, restricts the (potential) jumps in $k$ to the set $\Gamma_D$, the zero set of $\sin \theta$. 
Additionally, depending on the specifics of the problem of interest, the phase function $\theta$ might be subject to additional boundary conditions. In this work, we will impose the Dirichlet boundary condition $\theta = 0$ on $\partial \Omega$. It is also sometimes useful to impose an additional Neumann condition $\partial_n \theta = -1$, where $n$ is the outward directed normal to $\partial \Omega$ \cite{Jin2000Singular,Ercolani2018phase}, although we will not do this here.

The appropriateness of considering this class $\mathcal{A}$ of admissible phase fields can also be motivated through a consideration of the symmetries of the microscopic (Swift-Hohenberg for example) model and those of the `defect-free' Cross-Newell energy~\eqref{eq:rcn}. Since the Cross-Newell energy functional~\eqref{eq:rcn} has a larger symmetry group $\theta \rightarrow \theta + \delta, \theta \rightarrow -\theta$, than the symmetries of the microscopic model $\theta \rightarrow \pm \theta + 2 j  \pi, j  \in \mathbb{Z}$, we need to restrict the class of admissible functions to ensure that the model has the right ``physical" symmetries \cite{ercolani2009variational}. This is achieved by the restriction $\Delta_k \subseteq \Gamma_D$ in the definition of $\mathcal{A}$. Indeed, $\Delta_k \subseteq \Gamma_D$ is invariant under the transformations $\theta \to \pm \theta + 2 j  \pi$, but not under arbitrary phase shifts $\theta \to \theta  + \delta$.

These considerations are also relevant to the question of defining appropriate energy functionals for layered structures in the context of liquid crystals. While some models  allow for disclinations at arbitrary phase values, i.e. keep the symmetry of arbitrary phase shifts $\theta \to \theta + \delta$ \cite{aviles1987mathematical,degennes1995book}, more detailed models explicitly account for the roles of the layers and half-layers, and the interaction of defects with the layer structure \cite{Pevnyi2014Modeling,Xia2021Structural}. It would be interesting to further explore this connection with energy functionals for pattern formation \cite{newell2017elastic,Zhang2021Computing}.

\subsection{Relaxation}

We define a relaxation of RCN energy functional by 
\begin{equation}
    \label{eq:erelax}
    \mathcal{E}_{SBV} = \int (\Delta \theta)^2 + (|\nabla \theta|^2 -1)^2 d\mathcal{L}^2 + \sigma \int |\sin(\theta)| \, |\llbracket \nabla \theta \rrbracket| \, d(\mathcal{H}^1 \mres \Delta_k),
\end{equation}
where $\nabla \theta \in SBV$, $J$ is the jump set for $\nabla \theta$, $|\llbracket \nabla \theta \rrbracket| = \llbracket \nabla \theta \rrbracket \cdot \nu$ and  $\sigma > 0$ is an $O(1)$ constant. This can be viewed as replacing the constraint $\Delta_k \subseteq \Gamma_D$ by a penalty $\sigma \int |\sin(\theta)| \, |\llbracket \nabla \theta \rrbracket| \, d(\mathcal{H}^1 \mres \Delta_k)$, yielding an unconstrained variational problem.

$\mathcal{E}_{SBV}$ is a relaxation of $\mathcal{E}_{RCN}$ since $\mathcal{E}_{SBV} \leq \mathcal{E}_{RCN}$ for all Lipschitz functions $\theta$ such that $\nabla \theta \in SBV^2$. If $D^2 \theta \in L^2$, the two functionals are identical since $D^2 \theta = \nabla^2 \theta$ and the jump set of $\nabla \theta$ is empty. Conversely, if $\nabla \theta \in SBV^2(\Omega)$ but is not in $H^1$, the RCN energy equals $+\infty$ so that $\mathcal{E}_{SBV} \leq \mathcal{E}_{RCN}$.

Note that the relaxed energy can be finite even if $\theta \notin \mathcal{A}$, since $\mathcal{E}_{SBV} < \infty$ does not imply that $\int |\sin(\theta)| |\llbracket \nabla \theta \rrbracket| d\mathcal{H}^1 = 0$, or equivalently that $\Delta_k \subseteq \Gamma_D$. Consequently, the admissible set for $\mathcal{E}_{SBV}$ is strictly larger than $\mathcal{A}$. By the general `philosophy' of relaxation, in considering the relaxed functional, we are enlarging the admissible set thereby improving the likelihood of finding a minimizer. If the minimizer for the relaxed problem \eqref{eq:erelax} is also in $\mathcal{A}$, i.e. is admissible for the original problem~\eqref{rcn_sbv}, then we can conclude that we have also found a minimizer for the original problem.

If $\theta_1 \sim \theta_2$ are two equivalent phase functions, then $\mathcal{E}_{SBV}[\theta_1] = \mathcal{E}_{SBV}[\theta_2]$. In other words, the functional $\mathcal{E}_{SBV}$ respects the principle that $h = \cos \theta$ is physical and that gauge transformations, as in Thm.~\ref{thm:orient}, should not affect the energetics. For the purposes of computation and mathematical analysis, however, there are good reasons to `break the degeneracy' and pick a unique representative phase. One possible choice is the canonical representative $\tilde{\theta} = \arccos(\cos(\theta))$ as we discussed above. The range of $\tilde{\theta}$ is restricted to be in $[0,\pi]$, and $\tilde{\theta}$ has the maximal possible jump set, $\Delta_k = \Gamma_D$ \cite{Machon2019Aspects}. While this choice is unique, it has the unfortunate feature that, as the ratio of the wavelength to the size of the domain, $\epsilon \to 0$, the phase function will display bounded oscillations on increasingly finer scales, so that the weak limit will wash out all the information about the phase function. 

An alternative is to pick the oriented representative $\bar{\theta}$ that has the `smallest' allowed jump set $\Delta_k$ as we discuss in \S\ref{sec:orient}. We can break the `gauge degeneracy' of the relaxed energy function $\mathcal{E}_{SBV}$, leading to following the variational problem over the admissible set $\theta \in W_0^{1,\infty}, D\theta \in SBV^2$:
\begin{equation}
\label{eq:RSBV}
    \theta^\delta = \arg \min_\theta \left(\mathcal{E}_{SBV}[\theta] + \delta \mathcal{H}^1 (\Delta_k) \right),
\end{equation}
where $\delta$ is `small' and, ideally, we would want to take the limit $\delta \to 0^+$. This procedure is entirely analogous to the addition of the bending energy with a small coefficient $\epsilon$ to regularize unphysical behavior in the `bare' Cross-Newell energy functional \cite{newell1996defects, newell2017elastic} to obtain the regularized Cross-Newell energy~\eqref{eq:rcn}. 

As we will argue below, because of a regularizing effect akin to {\em numerical viscosity}, our method to solve the variational problem for the functional in~\eqref{eq:erelax} actually ends up solving the variational problem for the regularized functional~\eqref{eq:RSBV} with a small $\delta > 0$. This breaks the gauge degeneracy, and gives the sought for oriented representative $\bar{\theta}$.

\section{Minimizing movements for the SBV Cross-Newell energy}
\label{sec:bregman}

Our goal is to numerically discretize the energy in~\eqref{eq:erelax} and then minimize the resulting discretization. A significant hurdle is  discretizing the splitting 
\begin{equation}
  Div(D \theta)  = \Delta \theta \, d\mathcal{L}^2 + \llbracket \nabla \theta \rrbracket\cdot \nu \, d(\mathcal{H}^1 \mres \Delta_k),
  \label{eq:splitting}
\end{equation} 
where $D$ is the distributional gradient and $Div$ is the distributional divergence.

One approach to this problem, following the the ideas of Ambrosio and Tortorelli \cite{Ambrosio1992Approximation}, is to introduce an additional field $\phi^\epsilon$ and an additional length scale $\epsilon \ll 1$. The splitting into absolutely continuous and singular (w.r.t. Lebesgue measure) is effected through this additional field $\phi^\epsilon$ which has the following properties:
\begin{enumerate}
\item $\phi^\epsilon(x) \in [0,1]$ for all $x$ in the domain and for a fixed threshold $t \in (0,1)$, say $t = \frac{1}{2}$, we have that the measure  $\mathcal{L}^2(\{x : \phi^\epsilon(x) < t\})$ is $O(\epsilon)$.
\item $\phi^\epsilon$ is the minimizer for a Modica-Mortola type energy functional with length scale $\epsilon >0$, and is an ``approximate'' indicator function for the support of the absolutely continuous part in the splitting \cite{Ambrosio1992Approximation}. Equivalently, for a given $t \in (0,1)$, the sublevel set $\phi^\epsilon < t$  is an ``$\epsilon$-level approximation" of the free discontinuity $\Delta_k$. In other words, corresponding to the splitting $\lambda = \lambda_{ac} + \lambda \mres \Delta_k$, we have, at the $\epsilon$-level, and  for any given bounded continuous function $\psi$,
$$
    \int \psi d\lambda_{ac}  \approx \int \psi \phi^\epsilon d\lambda^\epsilon, \qquad
    \int \psi d(\lambda \mres \Delta_k)  \approx \int \psi (1-\phi^\epsilon) d \lambda^\epsilon,
$$
where $\lambda^\epsilon$ is absolutely continuous with respect to Lebesgue measure, but becomes increasingly singular as $\epsilon \to 0$ and converges in measure (i.e. weak-$*$) to $\lambda$ \cite{Ambrosio1992Approximation}.
\end{enumerate}

We use an alternate approach, based on treating the energy as a functional of the ``decoupled" quantities $\theta$,  $\rho = \Delta \theta$, and $d\mu = \llbracket \nabla \theta \rrbracket\cdot \nu d(\mathcal{H}^1 \mres \Delta_k)$, and minimizing subject to the {\em linear} constraint
\begin{equation}
\label{eq:constraint}
   -\int_{\Omega} \nabla \varphi \cdot \nabla \theta d\mathcal{L}^2  = \int \rho \varphi d\mathcal{L}^2 + \int \varphi \,d \mu,
\end{equation}
for all $\varphi \in C_c^\infty(\Omega)$. Eq.~\eqref{eq:constraint} is the weak form of~\eqref{eq:splitting}. In obtaining this result, we have used the fact that $\theta$ is Lipschitz so that $D \theta = \nabla \theta$. The energy in \eqref{eq:erelax} can be recast as 
\begin{equation}
    \label{eq:rlx2}
    \mathcal{E}_{SBV} = \int \left[ \rho^2 + (|\nabla \theta|^2 -1)^2\right] d\mathcal{L}^2 + \sigma \int |\sin(\theta)| \, d|\mu|,
\end{equation}
and is therefore the sum of an $L^2$ norm on $\rho$, a ``TV like" term in $d\mu$, and a lower-order (relative to $\rho = \Delta \theta$) nonconvex term in $\nabla \theta$. Following the approach in \cite{Jaramillo2021Modified}, we can treat this variational problem by exploiting convexity splitting along with the  split Bregman method. 

\subsection{The split Bregman IMEX algorithm}\label{s:SBalgorithm}

The Bregman algorithm \cite{Bregman1966Relaxation} is an iterative approach to solve the variational problem 
\begin{equation}
    u^* = \lim_{\lambda \to + \infty} \arg\min_{u \in X} (\lambda^{-1} J(u) +  H(u))
    \label{relaxation_variational}
\end{equation}
where $X$ is a Banach space with a norm we will henceforth denote by $\|\cdot \|$, and  $J: X \to [0,\infty]$ and $H:X \to [0,\infty]$ are convex. $H$ need not have a unique minimizer and, in this context, $J$ is a {\em regularizer}. Intuition suggests that $u^*$ minimizes $J$ {\em among the mimimizers of $H$}. For our applications we will further assume that  
\begin{enumerate}
\item $\{u \in X \, | \, H(u) = 0, J(u) < \infty \} \neq \emptyset$, 
\item $H$ is Fr\'{e}chet-differentiable so that the subgradient  $\partial H = \nabla H$, the Fr\'{e}chet derivative, and 
\item the sublevel sets  $J[u] \leq C$ are precompact in $X$. 
\end{enumerate}
The third assumption automatically holds for coercive functionals on the finite dimensional space $\mathbb{R}^n$. The difficulty with a direct approach to this regularized variational problem, say using gradient descent, is that as $\lambda$ gets large, the resulting system of equations can become poorly conditioned. 

The {\em Bregman distance} or {\em Bregman divergence} (for the functional $J$) is defined by
$$
D^p_J (u,v) = J(u)-J(v)-\langle p,u-v\rangle \quad \mbox{where } u,v \in X, p \in \partial J(v)
$$
and $\partial J(v)$ denotes the subdifferential of $J$ at the point $v$. Since $J$ is convex, $D^p_J(u,v) \geq 0$ and if $J$ is strictly convex, we have $D^p_J(u,v)=0$ only if $u=v$. The Bregman distance is convex in the argument $u$, although it is not a metric on $X$ since it is not necessarily symmetric in $u$ and $v$. The Bregman iteration is a primal-dual algorithm that starts with a pair $(u_0,p_0)$ where $p_0 \in \partial J( u_0)$ and updates $(u_k,p_k)$ by the rule \cite{Bregman1966Relaxation}
\begin{equation}
\label{bregman_iteration}
    u_{k+1} = \arg\min_{u \in X} \left[D^{p_k}_J(u,u_k) + \lambda H(u)\right], \quad p_{k+1} = p_k - \lambda \nabla H(u_k).
\end{equation}
It follows from a direct calculation that $p_{k+1}$ as defined by the iteration is indeed in the subgradient of $J$ at $u_{k+1}$, and further we have the estimate $H(u_k) \leq \frac{J(\tilde{u})}{\lambda k}$ where $\tilde{u} \in X$ is any minimizer of $H$ satisfying $H(\tilde{u}) = 0$ and $J(\tilde{u}) < \infty$. We refer the reader to the original article by Bregman \cite{Bregman1966Relaxation}, and to more recent work that builds on these ideas \cite{osher2005,yin2008bregman}, for the proofs of these claims. The compactness assumption, along with the direct method, shows that, for a subsequence, we have $u_{k_n} \to u^*$, thereby solving the regularized variational problem~\eqref{relaxation_variational}.

An important feature of this method is that, if $J$ is strictly convex, the iteration in~\eqref{bregman_iteration} converges for any choice of $\lambda > 0$ and we can thus choose $\lambda$ to improve the condition number of the minimization over $u$ \cite{osher2005} rather than necessarily sending $\lambda \to +\infty$ as in a direct approach to~\eqref{relaxation_variational}.

This framework can naturally handle convex optimization subject to linear constraints \cite{Bregman1966Relaxation,osher2005}. For our purposes, it suffices to consider the finite dimensional (i.e. discretized) setting. Given a convex functional $E:\mathbb{R}^n \to \mathbb{R}$ and a linear constraint $Au=b \in \mathbb{R}^m$, it follows from our discussion above that the Bregman iteration with the choices $X = \mathbb{R}^n, J(u)=E(u), H(u) = \frac{1}{2}\|A u - b\|^2$ will solve the constrained optimization problem $u = \arg \min E(u)$ subject to $Au = b$. In this case, the primal-dual Bregman iteration~\eqref{bregman_iteration} can be elegantly recast \cite{yin2008bregman} as 
\begin{equation}
\label{error_correction}
    u_{k+1} = \arg \min_u \left( E(u) + \frac{\lambda}{2} \|A u - b_k\|^2 \right), \quad b_{k+1} = b_k+b - A u_k.
\end{equation}
The second step in the iteration defines $b_{k+1}$ and is also called the {\em Bregman update} \cite{yin2008bregman,goldstein2009}. It amounts to adding  the residual $b-A u_k$ back to $b_k$. This feature, of ``adding back the noise" is key to the {\em error forgetting property}  of the Bregman method as adapted to convex optimization with linear constraints \cite{yin2013error}. In particular, this yields good performance for the algorithm even if the first step in the iteration, namely optimizing the augmented functional over $u$, is only carried out approximately \cite{osher2005,yin2013error}. For any convex function $E: \mathbb{R}^n \to \mathbb{R}$, the iteration~\eqref{error_correction} is provably convergent \cite[Thm 2.2]{osher2005}.

\begin{remark}
There are two further impediments to applying this approach to the functional $\mathcal{E}_{SBV}$ in~\eqref{eq:erelax}. They are
\begin{enumerate}
    \item The functional depends on various orders of derivatives of the phase field $\theta$ and further Radon-Nikodym splittings of these derivatives. This splitting is not straightforward to discretize. Additionally, the first step in the iteration~\eqref{error_correction}, namely minimizing the functional over $\theta$ will yield a very complicated 4-th order equation.
    \item The functional is {\em not} convex in $\theta$.
\end{enumerate}
\end{remark}

The first issue is addressed by the {\em split Bregman algorithm} \cite{goldstein2009} that treats the various orders of derivatives as independent variables that are related by linear (differential) constraints. For example, to minimize the functional $W[u_x] + V[u]$ where $W,V$ are convex functionals given by integrating a local energy density, the ideas is to instead consider the equivalent formulation,
\begin{equation}\label{e:constrain}
 \min_{u,d}  W[d] + \mathcal{V}[u] \quad \mbox{subject to} \quad u_x = d.
 \end{equation}
 We can solve this using the iteration~\eqref{error_correction} which gives
\begin{align*}
u^{k+1} & = \arg\min_{u,d}  \mathcal{V}[u] + \frac{\lambda}{2} \| d^k - u_x -b^k\|^2,\\
d^{k+1} & = \arg\min_{u,d} W[d] + \frac{\lambda}{2} \|d - u_x^{k+1}  -b^k\|^2,\\
b^{k+1} & =  b^k + (u_x^{k+1} - d^{k+1})
\end{align*}
The advantages of this method, as well as its application to problems in image processing, are discussed in \cite{goldstein2009,yin2013error}.
 
To address the second concern, the non-convexity of $\mathcal{E}_{SBV}$, we combine the ideas behind minimizing movements \cite{Ambrosio1995Minimizing}, IMEX (convexity splitting) schemes \cite{Ascher1995ImplicitexplicitMF,glasner2016improving} and split Bregman methods, in the spirit of the work in \cite{Jaramillo2021Modified}. The notion of minimizing movements, originally formulated by De Giorgi, generalizes the notion of gradient flows to energy functionals that are not necessarily differentiable, although that still possess sufficient structure to imply the existence of minimizers. In particular, given a coercive, lower semicontinuous functional $\mathcal{E}$ on a Banach space $X$, and a time-step $\tau$, evolution by minimizing movements is given by the sequence 
\begin{equation}
    u_{k+1} = \arg \min_{u \in X} \left[\frac{1}{2 \tau} \|u-u_k\|^2 + \mathcal{E}[u]\right].
    \label{min_moves}
\end{equation}
If $\mathcal{E}$ is Fr\'{e}chet differentiable, this scheme can be thought of as an implicit (backward Euler) time-stepping for the gradient flow $u_t = - \frac{\delta}{\delta u} \mathcal{E}$. However, in our setting, $\mathcal{E}_{SBV}$ is not convex so we have to adapt the method of minimizing movements. Consider a (non-unique!) splitting $\mathcal{E} = \mathcal{E}^+ + \mathcal{E}^-$ of a functional $\mathcal{E}$ into a convex and a concave part. For a differentiable functional $\mathcal{E}$, the weak formulation of the gradient flow is 
\[ \langle u_t,w \rangle  = - \left \langle \frac{\delta \mathcal{E}^+}{\delta u}[u],w \right \rangle -\left\langle \frac{\delta \mathcal{E}^-}{\delta u}[u],w \right \rangle. \]
In the convexity splitting scheme, the contribution of the nonconvex part $\mathcal{E}^-$ is treated explicitly in the time stepping, i.e. it is evaluated at a previous time step $t = t_k$ and treated as a forcing term. For a given time step $\tau$, the IMEX (implicit-explicit) algorithm for discretizing the gradient flow \cite{Ascher1995ImplicitexplicitMF} is given by,
\[ \left\langle \frac{u_{k+1}-u_k}{\tau}, w \right\rangle  =- \left \langle  \frac{\delta \mathcal{E}^+}{\delta u}[u_{k+1}],w \right \rangle - \left \langle \frac{\delta \mathcal{E}^-}{\delta u}[u_k],w \right \rangle.\]
This equation is formally an Euler Lagrange equation, and motivates a generalization of the minimizing movements defined in~\eqref{min_moves} through
\begin{equation}
    \label{e:Rayleigh}
u_{k+1} = \arg \min_{u \in X} \left[\frac{1}{2 \tau} \|u-u_k\|^2 + \mathcal{E}^+[u] + \left \langle \frac{\delta \mathcal{E}^-}{\delta u}[u_k],u-u_k \right \rangle + \mathcal{E}^-[u_k]\right].
\end{equation}
where we have assumed that $\mathcal{E}^-$ is Fr\'{e}chet-differentiable. Note that the last two terms correspond to  the linearization of $\mathcal{E}^-$ at $u_k$, so that the `effective' energy functional in~\eqref{e:Rayleigh} is convex. We can now apply the modified split Bregman algorithm described above to effect the minimization to find $u_{k+1}$. Owing to the presence of the term $\frac{1}{2 \tau} \|u-u_k\|^2$, the objective in ~\eqref{e:Rayleigh} is strictly convex and minimizers exist and are unique. Moreover, as shown in \cite{glasner2016improving}, for the case where $\mathcal{E}$ is differentiable, the sequence $\{u_k\}$  converges to a local minimum of $\mathcal{E}$ to within an error of $O(\tau)$ that is set by the size of the time-step.

\subsection{Finite element discretization}

We now discuss the discretization of the SBV relaxation of the Cross-Newell energy~\eqref{eq:erelax} and a numerical implementation of the method of minimizing movements along the lines of the discussion in \S~\ref{s:SBalgorithm}. Some of the following arguments are well known in the context of piecewise linear finite elements \cite{Brenner2008FEM} but we include them here to keep the discussion self contained. 

We start with a triangulation $\mathcal{T} = (V,E,F)$ of the closure $\bar{\Omega}$ of a given domain $\Omega$. Here $V,E$ and $F$ are respectively the vertices, edges and faces of the triangulation. We will denote the set of interior vertices by $V^0 \subseteq V$. Examples of  triangulations of an elliptical domain is shown in Fig.~\ref{fig:triangulate}.

\begin{figure}
    \centering
    \includegraphics[width=0.7\textwidth]{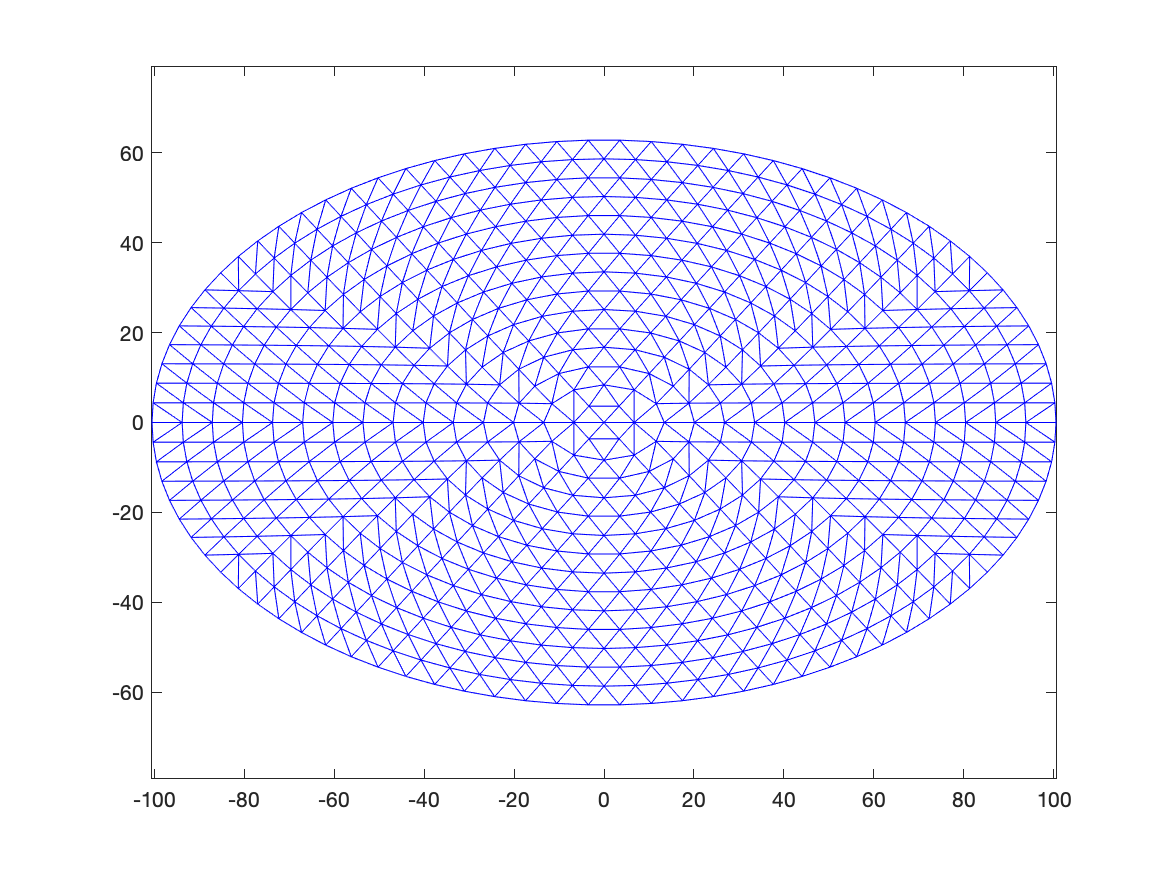} 
    \caption{A triangulation of the elliptical domain $\left(\frac{x_1}{32 \pi}\right)^2 + \left(\frac{x_2}{20 \pi}\right)^2 < 1$. For purposes of easy visualization, this mesh is 16 times coarser than the ones we use in our numerical simulations.}
    \label{fig:triangulate}
\end{figure}

After picking a triangulation, we will, if needed, slightly modify the domain $\Omega$ so that $\Omega$ is the union of triangles with straight edges. In particular, this replaces the original `smooth' curved portions of the boundary $\partial \Omega$ by piecewise linear curves.

Let $W$ denote the set of all continuous functions on $\Omega$ that vanish on $\partial \Omega$ and are piecewise linear on each face $F$ of the triangulation. This space is clearly finite dimensional, and determined by the values of the function on all the interior vertices. $f \in W$ implies that $f = \sum_i f_i \varphi_i$ where $f_i = f(v_i)$ for all vertices $v_i \in V^0$ and $\varphi_i$ is the piecewise linear `tent' function satisfying $\varphi_i(v_j) = \delta_{ij}$. More generally, the projection operator $f \mapsto \mathbf{F} := \{f(v_i)\}_{v_i \in V^0}$ gives a mapping from $C_0(\bar{\Omega})$ to $\mathbb{R}^n$, i.e. column vectors with $n$ entries. We will denote column vectors by the boldface latin alphabet $\mathbf{F}, \mathbf{G}, \ldots$. Here $n = |V^0|$ denotes the number of interior vertices in the triangulation $\mathcal{T}$. The sampling operator has an ``approximate inverse" given by the interpolation operator $\mathbf{F} = \{f_i\} \mapsto \sum f_i \varphi_i \in C_0(\bar{\Omega})$. 

A Borel measure on $\Omega$ can now be approximated as a linear mapping $W \to \mathbb{R}$. Appealing to the isomorphism $W \cong \mathbb{R}^n$, we deduce that Borel measures are represented as elements in the dual of $\mathbb{R}^n$, i.e. row vectors with $n$ entries, and the natural pairing is given by 
$$
\int_\Omega f d\mu \approx \sum_i \mu_i f_i = \mathbf{\mu}^T \mathbf{F}
$$

A function $g \in L^1_{\mathrm{loc}}$ corresponds to a Borel measure $\mu_g$ through $\mu_g(f) = \int f g d\mathcal{L}^2$ for all continuous functions $f \in C_0(\bar{\Omega})$. If $g$ is also in $C_0(\bar{\Omega})$, we have the quadrature rule 
$$
\int f g d\mathcal{L}^2 \approx \sum_{ij} g_i f_j \int_\Omega \varphi_i \varphi_j d \mathcal{L}^2 =   g_i M_{ij} f_j = \mathbf{G}^T \mathbf{M} \mathbf{F}, \quad M_{ij} = \int_\Omega \varphi_i \varphi_j d\mathcal{L}^2,
$$
where $\mathbf{M}$ is a symmetric `Mass' matrix and we have the correspondence $g d\mathcal{L}^2 \mapsto \mu_g =\mathbf{G}^T \mathbf{M}$.

Piecewise linear continuous functions on the triangulation $\mathcal{T}$ are Lipschitz, so there is a well defined gradient operator $\nabla$ on $W$. The pointwise gradient is piecewise constant on each face of the triangulation, so it can be represented by two $|F| \times |V^0|$ matrices, $\mathbf{D}_x$ and $\mathbf{D}_y$ that give, respectively, the $x$ and $y$ components of the gradient operator. We can represent the gradient $\nabla f$ by an element in $\mathbb{R}^{2 |F|}$ given by 
\begin{equation}
    \nabla \mathbf{F} = \begin{bmatrix} \mathbf{D}_x \\ \mathbf{D}_y \end{bmatrix} \mathbf{F} 
    \label{gradient_op}
\end{equation} 

Taking $\varphi = \varphi_i, i = 1,2, \ldots,n$ in turn, Eq.~\eqref{eq:constraint} yields the system of constraints
\begin{align}
     -\int (D \varphi_i) \cdot (D \theta) \, d \mathcal{L}^2 & = \int \rho \varphi_i \, d \mathcal{L}^2 + \int \varphi_i d\mu \nonumber \\
     \implies -\sum_j K_{ij} \theta_j & = \sum_j M_{ij} \rho_j + \mu_i,
\end{align}
 where $K_{ij} = \int (\nabla \varphi_i) \cdot (\nabla \varphi_j) d \mathcal{L}^2 $ is the {\em stiffness} matrix. A direct calculation yields
 $$
  \mathbf{K} = \mathbf{D}_x^T \mathbf{A} \mathbf{D}_x + \mathbf{D}_y^T \mathbf{A} \mathbf{D}_y
 $$
 where $\mathbf{A}$ denotes the $|F| \times |F|$ diagonal matrix whose entries are  given by the areas of the faces, $A_{ij} = \mathcal{L}^2(F_i) \delta_{ij}$. Here and henceforth $\delta_{ij}$ denotes the Kronecker delta.

The remaining consideration is to identify an appropriate convexity-splitting for the functional in~\eqref{eq:rlx2}, that rewrites $\mathcal{E}_{SBV}$ in terms of `decoupled' variables $\theta, \rho$ and $d\mu$. We have,
\begin{align*}
    \int_\Omega (|\nabla \theta|^2 -1)^2 \, d\mathcal{L}^2 & = \mathcal{L}^2(\Omega) + a \int |\nabla \theta|^2 d \mathcal{L}^2 - \int \left[(a+2)|\nabla \theta|^2 - |\nabla \theta|^4\right] d\mathcal{L}^2 
\end{align*}
so that the first term is a constant and can be dropped, the second term is convex and will be a part of $\mathcal{E}^+$, and the third term is concave if we have the uniform estimate $\|\nabla \theta\|_{\infty} < \frac{a+2}{6}$. We have $|\nabla \theta| \lesssim 1$ for stripe patterns, so it suffices to take $a = 6$ in our computations.
Also, the term $\sigma \int |\sin(\theta)| d|\mu| \approx \sum_i |\mu_i| |\sin(\theta_i)|$ is convex in $\mu_i$ (with fixed $\theta$) but is not necessarily convex in $\theta$. We will handle this term implicitly in the update for $\mu$ but treat it explicitly in the update for $\theta$.

With all of these considerations, we have the following split-Bregman algorithm for the evolution of stripe patterns with (potential) disclinations. We initialize by $\theta^0(x) = \mathrm{dist}(x,\partial \Omega), \rho^0 = 0, \mu^0 = 0, b^0 = 0$. We set $\lambda =0.5$ in our computations. In general, $\lambda$ can be chosen by trial and error to speed up the rate of convergence.
\begin{align}
    \rho^{k+1} & = \arg \min_{\rho \in \mathbb{R}^n} \left[\rho^T \mathbf{M} \rho + \frac{\lambda}{2} \|\mathbf{M} \rho + \mathbf{K} \theta^k + \mu^k + b^k\|^2 \right] \nonumber \\
    \mu^{k+1} & = \arg \min_{\mu \in \mathbb{R}^n} \left[\sigma \sum |\mu_i| |\sin(\theta_i^k)|  + \frac{\lambda}{2} \|\mathbf{M} \rho^{k+1} + \mathbf{K} \theta^k + \mu + b^k\|^2 \right]\nonumber \\
    \zeta^{k} & = \arg \min_{\zeta \in \mathbb{R}^n} \left[\frac{1}{2\tau} \|\zeta - \theta^{k}\|^2 + a \zeta^T \mathbf{K} \zeta - \zeta^T (2(a+2)\mathbf{K} - 4 \mathbf{L}^k) \theta^{k}\right. \nonumber \\ 
    & \quad \left. + \frac{\lambda}{2} \|\mathbf{M} \rho^{k+1} + \mathbf{K} \zeta + \mu^{k+1} + b^k\|^2  \right]\nonumber \\
    \theta^{k+1} & = \arg \min_{\theta \in \mathbb{R}^n} \left[\frac{1}{2\tau} \|\theta - \zeta^k\|^2 + \sigma \sum |\mu^{k+1}_i| |\sin(\zeta^k_i)| \right] \nonumber \\
    b^{k+1} & = b^k + \mathbf{M} \rho^{k+1} + \mathbf{K} \theta^{k+1} + \mu^{k+1}
    \label{eq:SB_IMEX}
\end{align}
where we are time-stepping $\theta$ using operator splitting as $\theta^k \to \zeta^k \to \theta^{k+1}$, and the last step is the Bregman update. The matrix $\mathbf{L}^k$, which gives the linearization of the concave part $\mathcal{E}^-$, is updated along with the other quantities, and is given by
\begin{align}
    \mathbf{L}^k = \mathbf{D}_x^T \mathbf{B}^k \mathbf{D}_x + \mathbf{D}_y^T \mathbf{B}^k \mathbf{D}_y, \nonumber \\
    B^k_{ij} = \mathcal{L}^2(F_i)\left[ (\mathbf{D}_x \theta^k)_i^2 + (\mathbf{D}_y \theta^k)_i^2 \right] \delta_{ij}
\end{align}

$\rho^{k+1}$ and $\zeta^k$ are defined by linear least squares, and the matrices $\mathbf{M}$ and $\mathbf{K}$ are sparse. Also, we don't need to solve them exactly on account of the error forgetting property of the Bregman iteration. We therefore use a few steps of the Conjugate gradient method to update $\rho$. For updating $\zeta$, we exploit the fact that the linear system is strongly diagonally dominant if the time step $\tau$ is small, and we can efficiently obtain an approximate solution with a few Gauss-Seidel steps. For the results presented below, we use five conjugate gradient iterations for updating $\rho$ and five Gauss-Seidel iterations for updating $\zeta$.

The update for $\mu_i$ can be done exactly and efficiently using a shrink operator as follows:
\begin{align}
    \nu & = -\mathbf{M} \rho^{k+1} - \mathbf{K} \theta^k - b^k  \nonumber \\
    \mu^{k+1}_i & = \mathrm{sign}(\nu_i) \max \left(|\nu_i| - \frac{\sigma}{\lambda} |\sin(\theta_i^k)|,0\right)  
     \label{mu_update}
\end{align}
As with $\mu$, the update for $\theta^{k+1}$ can also be done componentwise. A formal calculation indicates that, for each interior vertex $i \in V^0$, the relevant gradient flow is $ \theta_i(0) = \zeta^k_i,  \frac{d\theta_i}{d t} = - \frac{d}{d\theta} \sigma |\sin(\theta_i)| = - \sigma \, \mathrm{sign}(\sin(\theta_i)) \cos(\theta_i)$. This formal calculation, however, is only valid as long as $\sin(\theta) \neq 0$. 

The minima of $\sigma |\sin(\theta)|$, the potential driving the gradient flow, are at $\theta = j \pi, j \in \mathbb{Z}$. The rate at which $\theta(t)$ approaches these minima is non-vanishing, so it is possible to reach them in finite time, unlike the case of a gradient flow near a smooth minimum.  If $\theta_i(t) = j \pi $ for some $j \in \mathbb{Z}$ at any point in the interval $[0,\tau]$, $\sin(\theta)$ vanishes within a time step and the correct update is to set $\theta^{k+1}_i = \theta_i(\tau) = j \pi$. Using a forward Euler timestep for the formal gradient flow, along with the correction to prevent overshoot, gives an update rule for $\theta$ that has some similarities with~\eqref{mu_update}:
\begin{align}
    \beta_i & = \pi \times \left \lfloor \frac{\zeta_i^k}{\pi} + \frac{1}{2} \right \rfloor \nonumber \\
    \theta^{k+1}_i & = \zeta^k_i - \mathrm{sign}(\sin(2\zeta^k_i)) \min \left( \sigma \tau |\cos(\zeta^k_i)|,|\beta_i -\zeta^k_i|\right)  
     \label{theta_update}
\end{align}
where $\lfloor \cdot \rfloor$ denotes the floor function so that $ x \mapsto \lfloor x + \frac{1}{2} \rfloor$ maps a real number to the nearest integer (breaking ties by $j+\frac{1}{2} \mapsto j+1$ for $j \in \mathbb{Z}$).

\subsection{Numerical results} We illustrate the split Bregman IMEX approach to numerically solving the variational problem~\eqref{eq:erelax} using the example of an elliptical domain with semi-major axis $a = 32 \pi$ and semi-minor axis $b = 20 \pi$. We first iterate the system~\eqref{eq:SB_IMEX} with the restriction $\mu^k_i = 0$ at each step, i.e. we prescribe `by fiat' that there are no jumps in $k = \nabla \theta$. This is equivalent to minimizing $\mathcal{E}_{RCN}$ in~\eqref{eq:rcn} over the class of $W^{2,2}$  phase fields without any disclinations, or alternatively, to minimizing the Aviles-Giga functional~\eqref{eq:gl}. The results in \cite{Jin2000Singular} imply that, outside of a boundary layer with $O(1)$ width, the minimizer is well approximated by the $\bar{\theta}(x) = \mathrm{dist}(x,\partial \Omega)$, the distance to the boundary. Consequently, $|\nabla \theta| \approx 1$ and $\theta$ is a solution of the eikonal equation over the bulk of the domain, outside a narrow boundary layer where $|\nabla \theta|\approx 0$. These expectations are borne out by the results of our numerical method, as shown in the top row of Fig.~\ref{fig:AG} 

\begin{figure}
    \centering
    \includegraphics[height=1.8in]{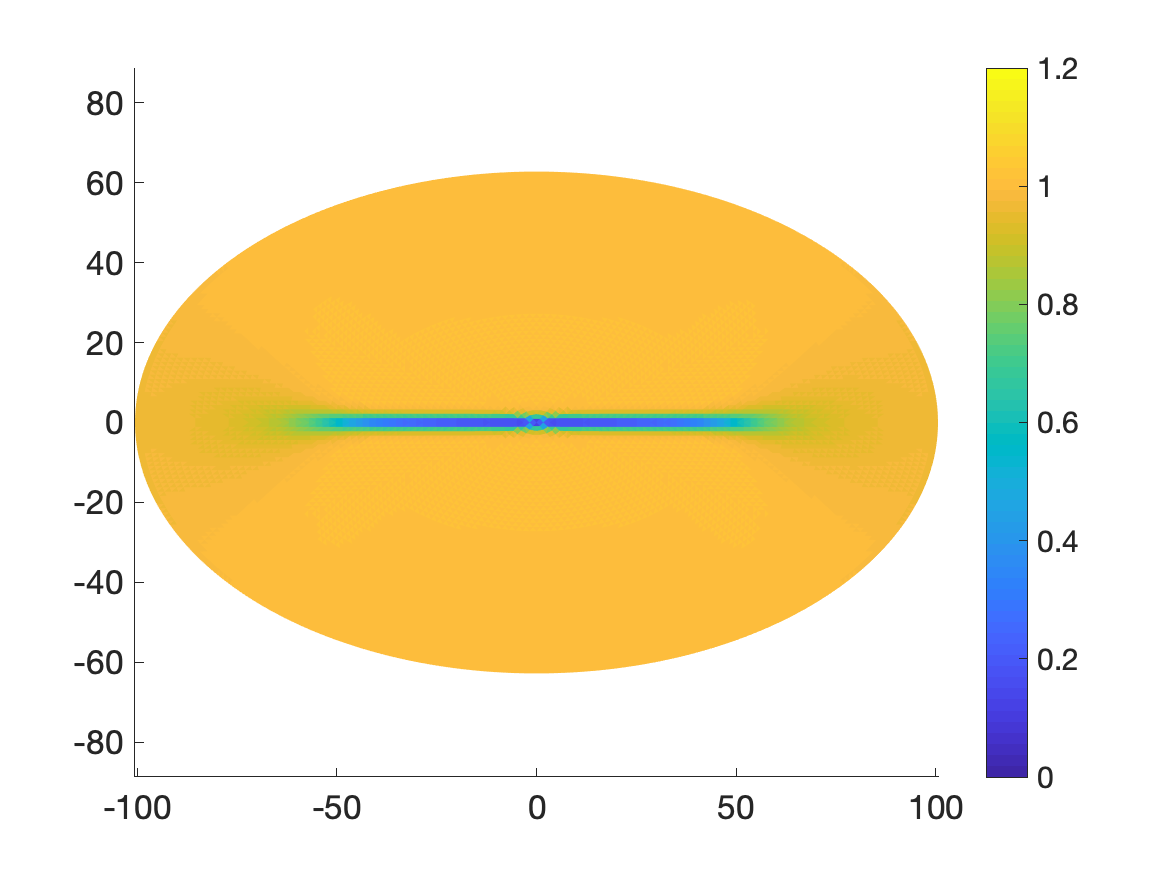} 
    \includegraphics[height=1.8in]{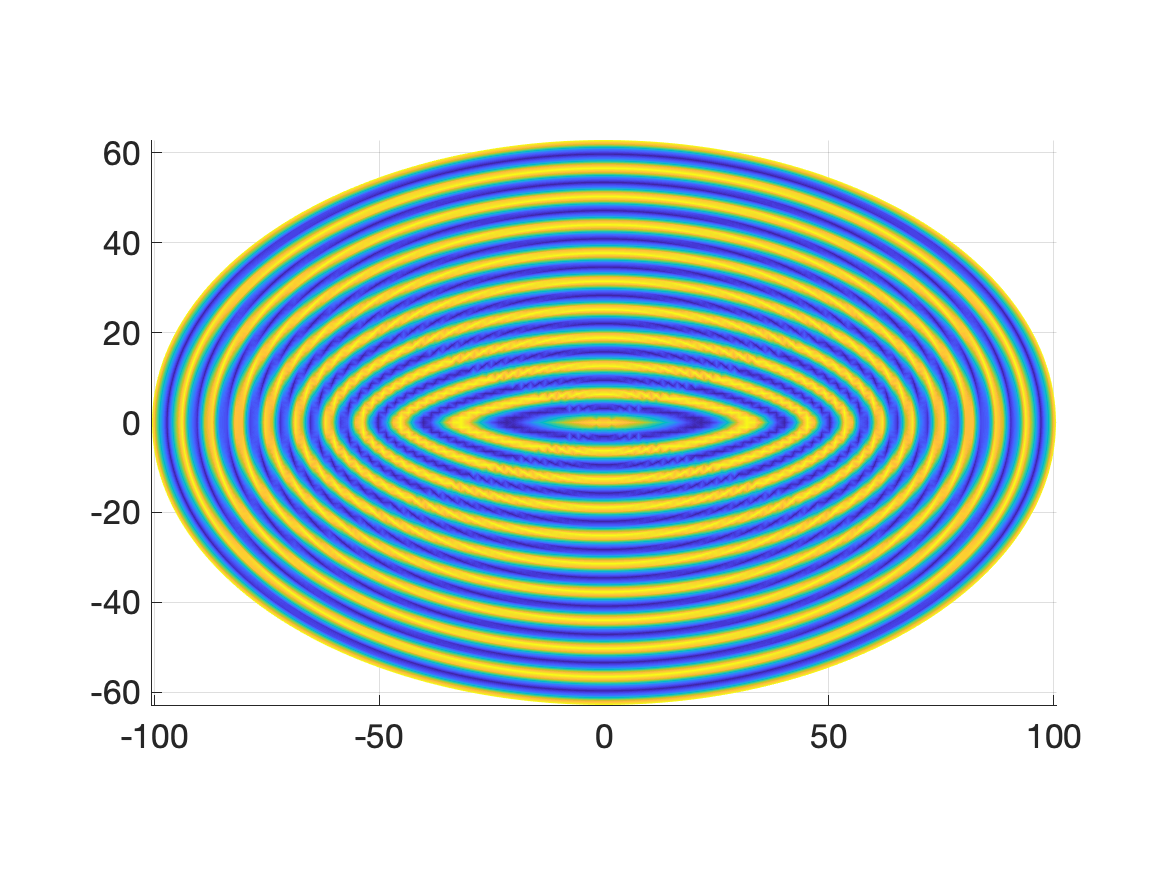} 
     \includegraphics[height=1.8in]{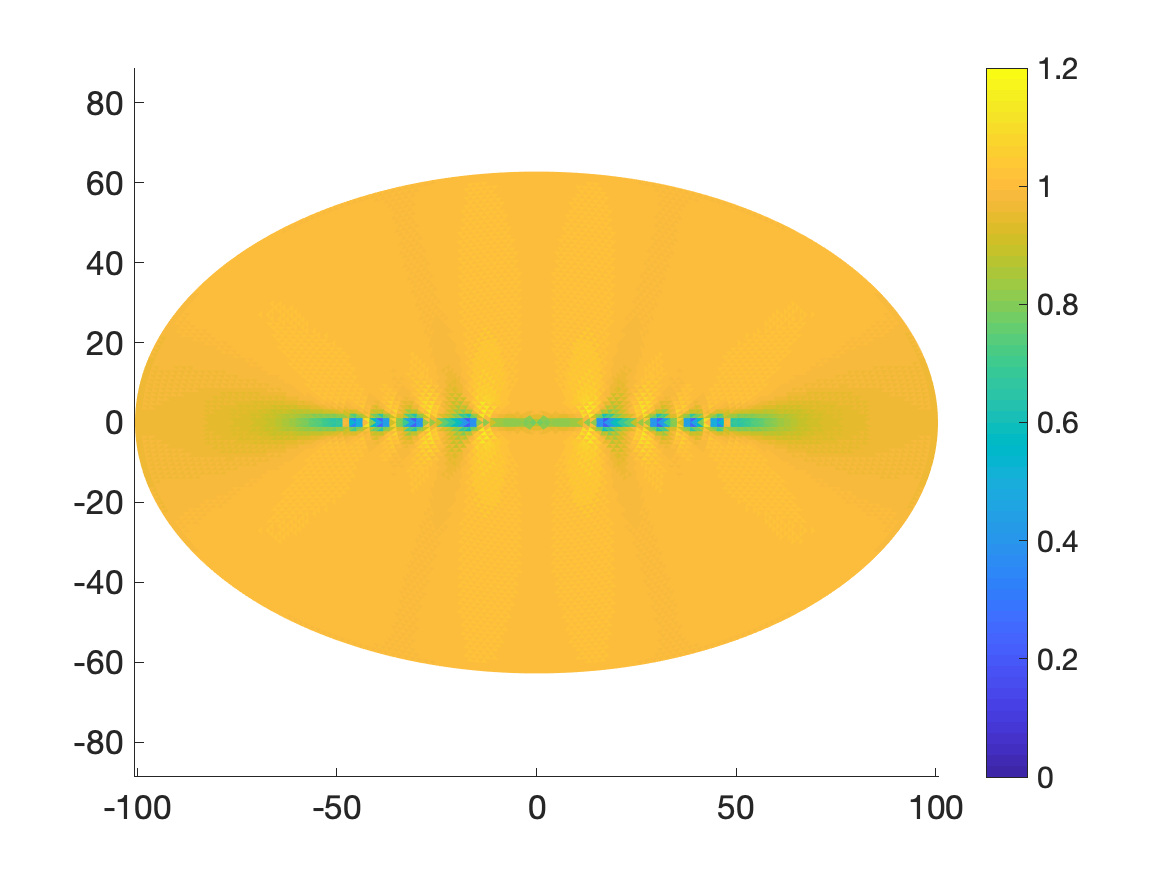} 
    \includegraphics[height=1.8in]{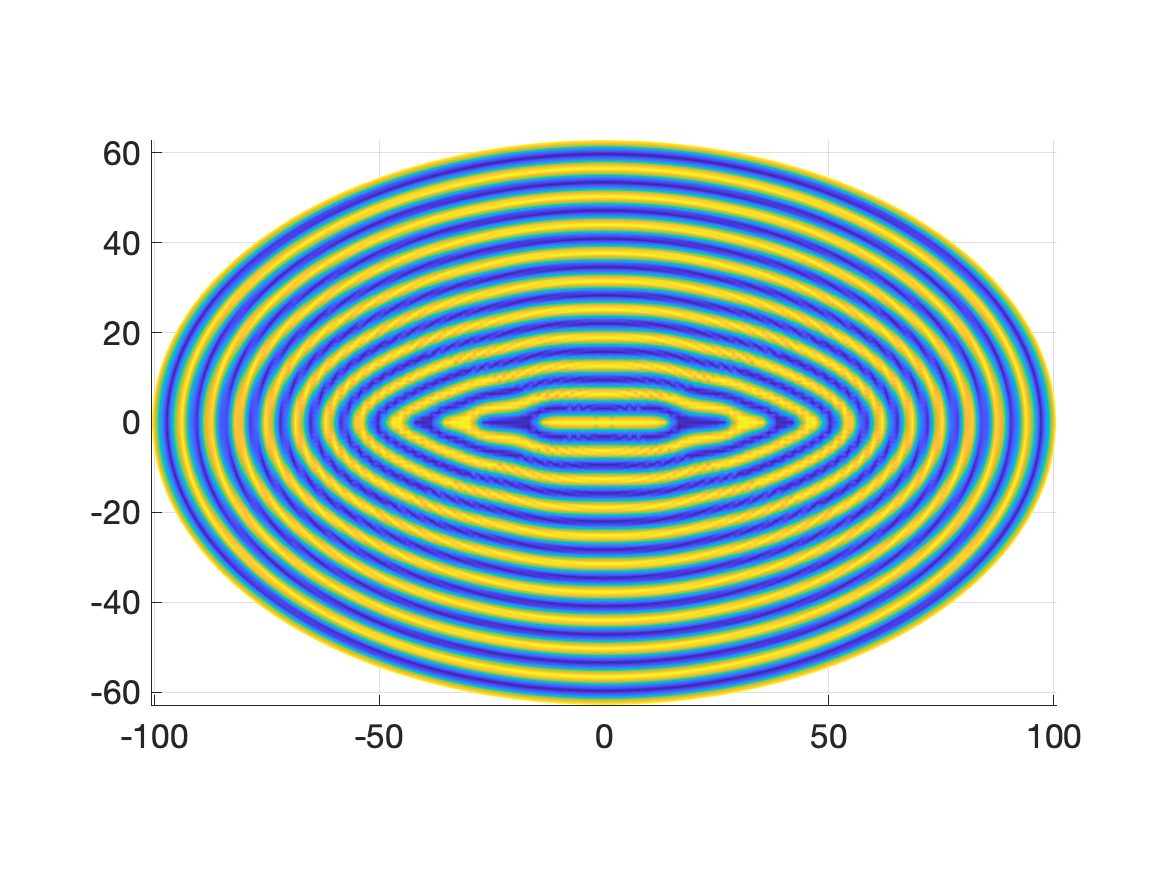}
    \caption{Stripe patterns on the elliptical domain $\left(\frac{x_1}{32 \pi}\right)^2 + \left(\frac{x_2}{20 \pi}\right)^2 < 1$. The top row is the restricted minimizer with $\mu_i = 0$ corresponding to $W^{2,2}$ phase functions and the resulting pattern has no disclinations. The bottom row is the minimizer allowing for non-orientable defects, i.e. $\mu_i$ can be non-zero. The left column displays the local wavenumber $|\nabla \theta|$ in each case,  and the right column displays the corresponding stripe pattern $h = \cos \theta$.}
    \label{fig:AG}
\end{figure}

The bottom row in Fig.~\ref{fig:AG} displays the results of iterating~\eqref{eq:SB_IMEX} without the restriction $\mu_i = 0$. There is a clear difference between the results in the two rows, due to the presence of disclinations. Note the similarity between the lower right figure and the experimental results in Fig.~\ref{ellipse-expt}. Away from the medial axis of the ellipse, we see that $|\nabla \theta| \approx 1$ even allowing for disclinations. 

As a final point, our discretization necessarily ``smears'' out the Borel measure $\mu = \mathcal{H}^1 \mres \Delta_k$ on the length scale of the mesh as we might infer from the relation $\int \varphi_i d\mu \approx \mu_i$. In particular, $\int |\sin \theta| d |\mu| \approx \sum_i \mu_i |\sin \theta_i|$ should be interpreted as a weighted average of $|\sin \theta|$ on all of the elements that intersect the support of $\mu$, giving an effective regularization as in~\eqref{eq:RSBV} with $\delta$ on the scale of the average of $|\sin \theta|$.  Since $|\nabla \theta| \approx 1$ everywhere, if we are using a triangulation with length scale $a$, we get that the average value of $|\sin \theta|$ on a band of width $a$ about the set where $\sin \theta$ vanishes is 
\begin{equation}
\label{cutoff}
\delta_{num} \simeq \frac{a}{2}
\end{equation}
This regularization is akin the the `numerical viscosity' that lead to dissipation in discretizations of the Euler equation, and vanishes as the mesh size $a \to 0$. However, for any $a > 0$, this regularization will break the gauge degeneracy between the various representatives of a phase field, and prefer the particular representative we seek, namely the one with the minimal (in $\mathcal{H}^1$ measure) jump set.

\section{Final remarks} 
\label{sec:discussion}

A mathematically convenient way to represent a stripe pattern is through a phase description. However, the phase is itself not directly directly observable. This work is an attempt to elucidate the consequences of the attendant `gauge transformations' which modify the phase but do not change any physical observable. Closed loops can now support a nontrivial ``topological charge" due to the fact that a transit along the loop can effect a non-trivial gauge transformation. This idea, with the gauge transformations coming from a (continuous) Lie group, is the cornerstone of fundamental field theories in physics. The corresponding phenomenon is less well understood, but also equally interesting, if the underlying group of transformations is discrete, as is the case in crystals, superconductors and many other condensed matter systems \cite{Kleinert_book}. 

In particular, gauge freedom within a discrete symmetry group, along with the existence of topologically nontrivial loops, naturally leads to the occurrence of defects. This is a ``soft" topological argument for the existence of defects, that relies, among other things on the existence of a ``macroscopic" phase reduction of the ``microscopic" (i.e. on the scale of the pattern wavelength) dynamics, e.g. the Swift-Hohenberg equation~\eqref{eq:sh}. It would be of significant interest to connect this argument with the considerable body of work on the rigorous bifurcation analysis of defects \cite{Haragus2012Dislocations,scheel2014small,Lloyd2017Continuation}, that start directly with the microscopic evolution equations and use tools from dynamical systems theory/center manifold reductions and functional analysis to prove the existence of defects and characterize their mutual interactions. 

One approach to studying point defects in stripe patterns is through a consideration of multi-valued fields \cite{Kleinert_book,newell1996defects,Ercolani2018phase}. A interesting alternative is through considering single-valued fields with an integrable energy density \cite{Zhang_nematics_2017,Zhang2021Computing}, that is a step towards resolving the physics of the system on a scale of the defects \cite{Pevnyi2014Modeling,Xia2021Structural}. We build on this approach by working in the class of (single-valued) SBV functions and allowing for free discontinuities to capture the defects in the underlying patterns. The key to our approach is the formulation of appropriate numerical methods for SBV functions. We propose borrowing ideas from numerical convex optimization and image processing \cite{yin2008bregman} for this purpose. 

We present an approach to variational problems in SBV that is conceptually different from the Ambrosio-Tortorelli regularization \cite{Ambrosio1992Approximation}. While our approach seems to give satisfactory numerical results, there are several associated analytic questions that we have not yet addressed. We have implicitly assumed existence of minimizers for the energies in~\eqref{eq:erelax}~and~\eqref{e:constrain}, as well as convergence of the iteration in~\eqref{eq:SB_IMEX}. These issues should be investigated rigorously. There are, however, several potential impediments for this analysis. The intuition guiding the approach in this work is the through the concept of admissible phase functions, i.e. Lipschitz functions whose level sets give a quantized measured foliation as defined by~\eqref{eq:overlap} . This is partially a topological condition. We therefore need to develop new ideas that would allow us to use uniform control of the total energy, $\mathcal{E}_{SBV}[\theta_n] \leq C$ for a minimizing sequence $\theta_n$, to guarantee that limits are also admissible phase functions. Additionally, $\mathcal{E}_{SBV}$ is closely related to the Aviles-Giga functional $\mathcal{E}_{AG}$, and the analytic challenges for the Aviles-Giga problem \cite{ambrosio1999line} are likely to have counterparts in the analysis of $\mathcal{E}_{SBV}$.  

This work is in the setting where the wavelength of the pattern is fixed relative to the size of the domain. To model situations wherein the pattern wavelength is much smaller than the domain, it would be interesting to consider the variational limits of our energy functional as the aspect ratio between the wavelength and the domain size vanishes. In this connection, it would be interesting to also consider the effect of the regularization in~\eqref{eq:RSBV}, that breaks the gauge symmetry and acts a selection mechanism. In particular, we have two small parameters $\epsilon$, the aspect ratio, and $\delta$, the regularization, and an important question is to understand how to pick their relative sizes in order to (i) obtain limit functionals through $\Gamma$ convergence and (ii) test the modeling framework by comparing the results using the limit functionals to physical observations with convection patterns and liquid crystals.

As illustrated in Fig.~\ref{fig:AG}, the numerical method in~\eqref{eq:SB_IMEX} can track topological changes in patterns, including the birth and evolution of disclinations. In this work, we have used ``uniform" meshes over the entire domain $\Omega$, as in Fig.~\ref{fig:triangulate}. Eq.~\eqref{cutoff} indicates that, in order to minimize the `numerical' regularization, we need a fine mesh to resolve disclinations and more generally, to resolve the free discontinuity $\Delta_k$, i.e. the support of the measure $\mu$. This is a `small' fraction of the domain $\Omega$, and we can get by with a much coarser mesh in the bulk of the domain $\Omega \setminus \Delta_k$. This is indeed one of the advantages of the phase reduction, as the modulation ansatz~\eqref{eq:modulation} allows us to replace a rapidly varying pattern field $\psi$ by a phase gradient $k = \nabla_X \Theta$ that varies slowly away from the defects. This provides a strong impetus to design adaptive mesh refinement schemes that resolve the free discontinuities accurately, while efficiently solving for the pattern by using coarse meshes elsewhere.

 In this work, we have restricted our considerations to 2 dimensional patterns containing point-like topological defects, motivated primarily by convection in fluids. Higher dimensional analogs, namely the foliation of 3 dimensional space by phase surfaces  with potential singularities  have been observed/computed in singular optics \cite{Dennis_Local_2004,Dennis_Polarization_2008} and liquid crystals \cite{Machon2019Aspects,chen2009symmetry,Alexander_Developed_2012,Santangelo_Curvature_2005}. The corresponding phase singularities can have a rich structure including loops and composite defects \cite{Aharoni2017Composite}, fractional defects \cite{newell2012pattern}, knots \cite{Tkalec_Knots_2011}, as well as the possibility of topological changes through reconnection \cite{Berry_Topological_2007,Berry_Reconnection_2012} of defect curves. We believe that the framework in this paper, in conjunction with extensions of our numerical methods using adaptive mesh refinement, have the potential to spur analytic as well as numerical advances in the study of evolving phase surfaces and their defects in higher dimensions.

\section*{Acknowledgements} We are grateful to Amit Acharya, Nick Ercolani, Gabriela Jaramillo, Alan Newell, Guanying Peng, Tien-Tsan Shieh and Raghav Venkatraman for useful discussions. This work was supported in part by the NSF award GCR-2020915. We are especially thankful to Irene Fonseca for her very clear lectures, in the GCR working group seminar, on SBV functions and free discontinuity problems.

\bibliographystyle{AIMS}
\bibliography{pattern_ref.bib}

\end{document}